\newtheorem{theorem}{Theorem}[section]
\newtheorem{lemma}[theorem]{Lemma}
\newtheorem{corollary}[theorem]{Corollary}
\newtheorem{proposition}[theorem]{Proposition}
\theoremstyle{definition}
\newtheorem{definition}[theorem]{Definition}
\newtheorem{example}[theorem]{Example}
\theoremstyle{remark}
\newtheorem{remark}[theorem]{Remark}
\newcommand{\F}{\mathbb{F}}
\newcommand{\x}{\mathbf{x}}
\newcommand{\C}{{\mathcal{C}}}
\newcommand{\be}{\begin{eqnarray}}
\newcommand{\ee}{\end{eqnarray}}
\newcommand{\nn}{{\nonumber}}
\newcommand{\ra}{\rightarrow}
\newcommand{\Rmnum}[1]{\expandafter\@slowromancap\romannumeral #1@}
\begin{document}
\begin{sloppypar}

\title{Determining the covering radius of all generalized Zetterberg codes in odd characteristic\thanks{The research of Minjia Shi and Shitao Li is supported by the National Natural Science Foundation of China under Grant 12471490. The research of Tor Helleseth was supported by the Research Council of Norway under Grant 247742/O70. The research of Ferruh \"{O}zbudak is supported by T\"{U}B\.{I}TAK under Grant 223N065.}}

 \author{Minjia Shi\thanks{Minjia Shi is with the Key Laboratory of Intelligent Computing and Signal Processing, Ministry of Education, School of Mathematical Sciences, Anhui University, Hefei, 230601, China, (email: smjwcl.good@163.com).},~
 Shitao Li\thanks{Shitao Li is with the School of Mathematical Sciences, Anhui University, Hefei, 230601, Anhui, China, (email: lishitao0216@163.com).},~
 Tor Helleseth\thanks{Tor Helleseth is with the Department of Informatics, University of Bergen, Bergen, Norway, (email: tor.helleseth@uib.no).},~
 Ferruh \"{O}zbudak\thanks{Ferruh \"{O}zbudak is with the Faculty of Engineering and Natural Sciences, Sabanc{\i} University, 34956 Istanbul, Turkiye, (email: ferruh.ozbudak@sabanciuniv.edu).} }

\maketitle

\begin{abstract}
For an integer $s\ge 1$, let $\mathcal{C}_s(q_0)$ be the generalized Zetterberg code of length $q_0^s+1$ over the finite field $\F_{q_0}$ of odd characteristic. Recently, Shi, Helleseth, and \"{O}zbudak (IEEE Trans. Inf. Theory 69(11): 7025-7048, 2023) determined the covering radius of $\mathcal{C}_s(q_0)$ for $q_0^s \not \equiv 7 \pmod{8}$, and left the remaining case as an open problem. In this paper, we develop a general technique involving arithmetic of finite fields and algebraic curves over finite fields to determine the covering radius of all generalized Zetterberg codes for $q_0^s \equiv 7 \pmod{8}$, which therefore solves this open problem. We also introduce the concept of twisted half generalized Zetterberg codes of length $\frac{q_0^s+1}{2}$, and show the same results hold for them. As a result, we obtain some quasi-perfect codes.
\end{abstract}
\noindent{\bf Keywords:} Covering radius, generalized Zetterberg codes, quasi-perfect codes, Weil's Sum \\
\noindent{\bf MSC(2010)}: Primary 94B05

\section{Introduction}
Let $\F_{q_0}$ denote a finite field with $q_0$ elements, where $q_0$ is a prime power. Let $q=q_0^s$ for an integer $s \ge 1$ and let
$\F_{q_0^s}^*$ denote the multiplicative group of the field extension $\F_{q_0^s}$, so that $\F_{q_0^s}^*=\F_{q_0^s} \setminus \{0\}$.
The {\em (Hamming) weight} of a vector ${\bf c}\in \F_{q_0}^n$, denoted by $wt({\bf c})$, is the number of nonzero components in ${\bf c}$. A ${q_0}$-ary {\em linear $[n,k,d]$ code} $\C$ is a $k$-dimensional subspace of $\mathbb F_{{q_0}}^n$, where $d$ is the minimum weight of all nonzero codewords of $\C$. The {\em (Euclidean) dual} code $\C^{\perp}$ of a ${q_0}$-ary linear code $\C$ of length $n$ is defined by
$\C^{\perp}=\{\textbf y\in \F_{{q_0}}^{n}~|~\langle \textbf x, \textbf y\rangle=0~ {\rm for\ all}\ \textbf x\in \C \},$
where $\langle \textbf x, \textbf y\rangle=\sum_{i=1}^n x_iy_i$ for ${\bf x} = (x_1, x_2, \ldots, x_n)$ and $\textbf y = (y_1, y_2, \ldots, y_n)\in \F_{{q_0}}^{n}$.
A {\em parity check matrix} of a $q_0$-ary linear $[n, k]$ code $\C$ is any $(n-k)\times n$ matrix $P$ whose rows form a basis of $\C^\perp$.

Let $\C$ be a ${q_0}$-ary linear $[n,k,d]$ code with parity check matrix $P$. The {\em packing radius} $t(\C)$ of $\C$ is defined by $t(\C)=\left\lfloor\frac{d-1}{2}\right\rfloor$. The {\em covering radius} of $\C$, denoted by $\rho(\C)$, is the smallest integer such that the balls of this radius around the codewords cover the complete space $\F_{q_0}^n$, equivalently, is the least integer $\rho$ such that every column vector of $n-k$ entries is a sum of some $\rho$ or fewer columns of $P$.
The covering radius is a basic geometric parameter of a code. It has various applications including decoding, data compression, testing, write-once memories and combinatorics in general. For further details on the significance and applications of covering radius of codes,
we refer, for example, to \cite{Brualdi-Litsyn-Pless,Cohen-book,CKMS-IT,CLLM-AAECC}, and the references therein.

There are important connections to perfect codes and quasi-perfect codes. It is well-known that for any code $C$, its packing radius $t(C)$ is at most its covering radius $\rho(C)$, and $C$ is called {\em perfect} if $t(C)= \rho(C)$, and {\em quasi-perfect} if $t(C)+1= \rho(C)$. In the 1970's, the parameters of perfect codes over finite fields were completely determined \cite{perfect1,perfect2,perfect3}. The classification task for the possible parameters of quasi-perfect codes is much more complicated. A more achievable objective is to construct quasi-perfect codes, which has been studied for many years (see \cite{ternary-QP,ternary-QP2,GPZ,LH-CCDS2016,Z,SM-1,SM,Survey-IT,Helleseth-IT}). There is also an important connections to complete caps in projective spaces. There is a wide literature on complete caps, we refer, for example, \cite{Bruen-W,cossidente,Davydov-M-P,Giuliett,Hirschfeld-Storme}.

However, the problem of finding the covering radius of a given code is very hard in general \cite{DJ-MC}. It has been proven in \cite{M-IT1984} that computing the exact covering radius of a code is both NP-hard and co-NP-hard. In general, most of the results give bounds on the covering radii rather than exact values \cite{AB-DCC2002,Helleseth-DAM,Sole-IT1995,T-DAM1987,MC-IT2003,JCTA1}. There are only a few classes of codes with special parameters in which the covering radii are known. A non-exhaustive list is \cite{ternary-QP,ternary-QP2,GPZ,LH-CCDS2016,SM-1,SM,Helleseth-IT,JCTA2}. Recently, Shi {\em et al.} \cite{SHOS2022IT} develop some interesting techniques to determine the exact covering radii of Melas codes, which supplemented the results of \cite{SM-1,M1,MC-IT2003,VB}. Very recently, Shi, Helleseth, and \"{O}zbudak \cite{SHO2023IT} presented a generalization of binary Zetterberg codes \cite{Z}, proposed the concept of generalized Zetterberg codes, and further studied the exact coverage radii of such codes.
Specifically, let $s \ge 1$ be an integer. Put $q=q_0^s$ and $n=q+1$. Let $H$ be the subgroup of $\F_{q^2}^*$ with $|H|=n$.
Let $\{h_1, \ldots, h_n\}$ be an enumeration of $H$. The generalized Zetterberg code $\mathcal{C}_s(q_0)$
of length $n=q_0^s+1$ over $\F_{q_0}$ is the $\F_{q_0}$-linear code  with the parity check matrix
\begin{align}\label{def.Zetterberg}
P=\left[ h_1 \; h_2 \; \cdots h_n\right].
\end{align}
In fact, each column $h_i$ in $P$ is considered as $\phi(h_j) \in \F_{q_0}^{2s}$, where $\phi$ is any $\F_{q_0}$-linear bijective map from $\F_{{q^2}}$ to $\F_{q_0}^{2s}$.
It has been proven that $\mathcal{C}_s(q_0)$ has dimension $n-2s$ (see \cite[Lemma A.1]{SHO2023IT}). Note that $H$ is omitted in the notation $\mathcal{C}_s(q_0)$ since $H$ is the unique subgroup of $\F_{q^2}^*$ of order $n$. The same paper \cite{SHO2023IT} determined the covering radius of $\mathcal{C}_s(q_0)$ for $q_0^s \not \equiv 7 \pmod{8}$, and left the remaining case as an open problem. In fact, they \cite{SHO2023IT} also proposed the concepts of half and twisted half generalized Zetterberg codes in odd characteristic, show that their covering radii are the same as the covering radii of the (full) generalized Zetterberg codes, and further utilized them to obtain some quasi-perfect codes over different finite fields.

Please note that if $s$ is even, then $q_0^s \not\equiv 7 \pmod{8}$, this case have been completely solved in \cite{SHO2023IT}. Therefore, we will only consider that $q_0^s \equiv 7 \pmod{8}$ and $s$ is odd in this paper. We develop a general technique involving arithmetic of finite fields and algebraic curves over finite fields to determine the covering radius of all generalized Zetterberg codes for $q_0^s \equiv 7 \pmod{8}$, which therefore solves the above open problem proposed in \cite{SHO2023IT}. Specifically, for $q_0^s \equiv 7 \pmod{8}$ and $s\geq 3$, we present an explicit bound $s^*$ so that if $s\geq s^*$, then the covering radius of $\mathcal{C}_s(q_0)$ is 3. we also show that the covering radius of $\mathcal{C}_s(q_0)$ is 2 for $s=1$. We also introduce the concept of twisted half generalized Zetterberg codes of length $\frac{q_0^s+1}{2}$, and show the same results hold for them. As a result, we obtain some quasi-perfect codes.

This paper is organized as follows. Next section provides some useful results related to this paper. In Section \ref{section.upper bounds}, we prove the covering radius of $\mathcal{C}_s(q_0)$ is at most $3$, which is a long and quite technical section. In Section \ref{section4}, we determine the exact covering radius in most cases. In Section \ref{section.half twisted half}, we extend our results to twisted half generalized Zetterberg codes. We conclude in Section \ref{section6}.

\section{Preliminaries}

Let $\F_{q_0}$ be a finite field of odd characteristic. Let $q=q_0^s$ for an integer $s \ge 1$. Let
$$S=\{\F_q:~q~\mbox{is odd and}~q \not \equiv 7\!\!\!\!\!\pmod{8}\}.$$
Note that $\F_7, \F_{47}, \F_{31} \not \in S$.
In \cite{SHO2023IT}, $\mathcal{C}_s(q_0)$ of length $q+1$ and is considered for all $q \in S$.
For integers $\ell \ge 3$, let
\be
S_{\ell}=\{ \F_q: \; \mbox{$q$ is odd and $q  \equiv 2^\ell-1\!\!\!\!\! \pmod{2^{\ell +1}}$}\}.
\nn\ee

In this section, $\mathcal{C}_s(q_0)$ of length $q+1$ is considered for all integers $\ell \ge 3$  and  all $q \in S_\ell$.
In the next lemma we show that the set
\be
S \cup \bigcup_{\ell=3}^\infty S_\ell
\nn\ee
is the set of all finite fields of odd characteristic.
For example, $\F_7 \in S_3$, $\F_{47} \in S_4$ and $\F_{31}  \in S_5$.

In particular, this means that we consider $\mathcal{C}_s(q_0)$ of length $q+1$ for all finite fields of odd characteristic, which completes the study started in \cite{SHO2023IT} by solving the corresponding open problem.

\begin{lemma} \label{lemma1.determineCR}
Let $\F_q$ be a finite field of odd characteristic. If $q \equiv 7 \pmod{8}$, then there  exists an integer $ \ell \ge 3$ such that
\be
q \equiv 2^\ell -1\!\!\!\!\! \pmod{2^{\ell +1}}.
\nn\ee
\end{lemma}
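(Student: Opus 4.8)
The plan is to use the 2-adic valuation of $q+1$. Since $q \equiv 7 \pmod{8}$ is equivalent to $8 \mid q+1$, the exponent $\ell := v_2(q+1)$ — the largest power of $2$ dividing $q+1$ — satisfies $\ell \ge 3$. Write $q+1 = 2^\ell m$ with $m$ odd (possible precisely because $\ell$ is the full $2$-adic valuation). This $\ell$ will be the integer whose existence is claimed.

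Next I would verify the congruence directly. We have $q = 2^\ell m - 1$, so
\be
q - (2^\ell - 1) = 2^\ell m - 2^\ell = 2^\ell(m-1).
\nn\ee
Since $m$ is odd, $m-1$ is even, hence $2^{\ell+1} \mid 2^\ell(m-1)$, which is exactly the statement $q \equiv 2^\ell - 1 \pmod{2^{\ell+1}}$. This completes the argument.

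There is essentially no obstacle here: the only point to be careful about is that $\ell$ must be taken to be the \emph{exact} $2$-adic valuation of $q+1$ (not merely some $\ell$ with $2^\ell \mid q+1$), since the oddness of the cofactor $m$ is what forces the residue to be $2^\ell - 1$ rather than $2^{\ell+1}-1$ modulo $2^{\ell+1}$. One may also note in passing that this $\ell$ is unique, so the decomposition $S \cup \bigcup_{\ell \ge 3} S_\ell$ of all finite fields of odd characteristic is in fact a partition, though uniqueness is not needed for the statement as written.
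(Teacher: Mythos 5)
Your proof is correct. The integer you produce, $\ell = v_2(q+1)$, is the same one the paper constructs, but you reach it by a cleaner route: the paper writes out the binary expansion of $q$, locates the first zero digit at index $\ge 3$ (splitting into the cases where such a zero exists and where all higher digits are $1$), and reads off the residue from the partial sum $1+2+\cdots+2^{\ell-1}$; you instead observe that $q\equiv 7\pmod 8$ forces $v_2(q+1)\ge 3$, set $q+1=2^\ell m$ with $m$ odd, and note that $q-(2^\ell-1)=2^\ell(m-1)$ is divisible by $2^{\ell+1}$ precisely because $m$ is odd. This eliminates the case analysis entirely, and your remark that $\ell$ must be the \emph{exact} valuation (otherwise the cofactor need not be odd and the residue could be $2^{\ell+1}-1$ instead) is exactly the right point of care. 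Your observation that this $\ell$ is unique, so that $S\cup\bigcup_{\ell\ge 3}S_\ell$ is a partition, is a bonus the paper does not state but which is implicit in its usage. No gaps.
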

\begin{proof}
As $q \equiv 7 \pmod{8}$, it is clear that $q \ge 7$. Let $t$ be the smallest positive integer such that
$q< 2^t.$
We consider the $2$-adic expansion of $q$ given by
\be
q=1+2+4+a_3 2^3 + a_4 2^4 + \cdots +a_{t-1} 2^{t-1},
\nn\ee
where $a_3, a_4, \ldots, a_{t-1} \in \{0,1\}$. Here we use the fact that $q \equiv 7 \pmod{8}$ as $a_0=a_1=a_2=1$ in the expansion above.
First assume that
$(a_3,a_4, \ldots, a_{t-1})=(1,1, \ldots, 1).$
Put $\ell=t$. It is clear that
\be
q=1+2+ \cdots + 2^{t-1}=2^t-1 \equiv 2^\ell -1 \!\!\!\!\! \pmod{2^{\ell +1}}).
\nn\ee

Next we assume that there exists zero in $(a_3,a_4, \ldots, a_{t-1})$. Let $\ell$ be the smallest index $3 \le \ell \le  t-1$ such that $a_\ell =0$. Then we have
\be
q=1+2+4+ \cdots + 2^{\ell -1} + 0 + a_{\ell +1} 2^{\ell +1} + a_{\ell +2} 2^{\ell +2} + \cdots + a_{t-1}2^{t-1}.
\nn\ee
This implies that
\be
q \equiv 1 + 2 + 4 + \cdots + 2^{\ell -1} \equiv  2^{\ell} -1 \!\!\!\!\! \pmod{2^{\ell +1}}
\nn\ee
independent from the values of $a_{\ell +1}, a_{\ell +2}, \cdots, a_{t-1}$. This completes the proof.
\end{proof}

Let $\F_{q_0}$ be a finite field of odd characteristic and $s \ge 1$ be an integer. Put $q=q_0^s$. Note that
\be
\F_q \in S \rlap{$\quad\not$}\implies \F_{q_0} \in S.
\nn\ee
For example if $q_0=7$ and $s=2$, then we have
\be
\F_{7^2} \in S ~ \mbox{but}~ \F_7 \not \in S.
\nn\ee

As a consequence of this situation, in \cite{SHO2023IT} the cases $s$ even and $s$ odd are considered separately.
However if we change $S$ with $S_3$, then this is not true. Let $\F_{q_0}$ be a finite field of odd characteristic and $s \ge 1$ be an integer. Put $q=q_0^s$.
We have
\be \label{cond.S3}
\F_q \in S_3 \iff \F_{q_0} \in S_3 \; \mbox{and} \; \mbox{$s$ is odd}.
\ee
Indeed let $s \ge 1$ be an integer. Assume that $q_0 \equiv 7 \pmod{16}$. Then $q_0^2 \equiv 49 \equiv 1 \pmod{16}$ and hence we have
\be
q_0^s \equiv 7\!\!\!\!\! \pmod{16} \iff \mbox{$s$ is odd}.
\nn\ee
This completes the proof of the sufficiency of the statement in Eq. (\ref{cond.S3}).
Moreover we have
\be
\begin{array}{l}
q_0 \equiv 1 \!\!\!\!\!\pmod{16} \implies q_0^s \equiv 1 \!\!\!\!\!\pmod{16}, \\
q_0 \equiv 3 \!\!\!\!\!\pmod{16} \implies q_0^s \equiv a \!\!\!\!\!\pmod{16} \; \mbox{with} \; a \in \{3,9,11,1\}, \\
q_0 \equiv 5 \!\!\!\!\!\pmod{16} \implies q_0^s \equiv a \!\!\!\!\!\pmod{16} \; \mbox{with} \; a \in \{5,9,13,1\}, \\
q_0 \equiv 7  \!\!\!\!\!\pmod{16} \implies q_0^s \equiv a \!\!\!\!\!\pmod{16} \; \mbox{with} \; a \in \{7,1\}, \\
q_0 \equiv 9 \!\!\!\!\!\pmod{16} \implies q_0^s \equiv a \!\!\!\!\!\pmod{16} \; \mbox{with} \; a \in \{9,1\}, \\
q_0 \equiv 11 \!\!\!\!\!\pmod{16} \implies q_0^s \equiv a \!\!\!\!\!\pmod{16} \; \mbox{with} \; a \in \{11,9,3,1\}, \\
q_0 \equiv 13 \!\!\!\!\!\pmod{16} \implies q_0^s \equiv a \!\!\!\!\!\pmod{16} \; \mbox{with} \; a \in \{13,9,5,1\}, \\
q_0 \equiv 15 \!\!\!\!\!\pmod{16} \implies q_0^s \equiv a \!\!\!\!\!\pmod{16} \; \mbox{with} \; a \in \{15,1\}. \\
\end{array}
\nn\ee
These arguments imply that the necessity statement of Eq. (\ref{cond.S3}) holds as well.

Consequently if $\F_{q} \in S_3$ and $q=q_0^s$ for an integer $s \ge 1$, then we can assume that $\F_{q_0} \in S_3$ and $s$ is odd without loss of generality.

In the next two lemmas we generalize Eq. (\ref{cond.S3}) for any integer $\ell \ge 3$. Namely, in particular, we prove the following: Let $\ell \ge 3$ be an integer. If $\F_{q} \in S_\ell$ and $q=q_0^s$ for an integer $s \ge 1$, then we can assume that $\F_{q_0} \in S_\ell$ and $s$ is odd without loss of generality.

We use induction in the proof of the next lemma. Hence it is more suitable to state it for an odd integer  $A\ge 1$. We will apply it for a finite field $\F_{q_0}$
of odd characteristic such that $A=q_0$.

\begin{lemma} \label{lemma2.determineCR}
Let $A \ge 1$ be an odd integer.
Let $\ell \ge 2$ be an integer. If there exists an integer $s \ge 1$ such that
\be\label{e0.lemma2.determineCR}
A^s \equiv 2^\ell -1 \!\!\!\!\! \pmod{2^{\ell}},
\ee
then we have
\begin{itemize}
\item[{\rm 1)}] $A \equiv 2^\ell -1  \pmod{2^{\ell}}$, and
\item[{\rm 2)}] $s$ is odd.
\end{itemize}
Conversely if the items {\rm 1)} and {\rm 2)} above hold, then Eq. (\ref{e0.lemma2.determineCR}) holds.
\end{lemma}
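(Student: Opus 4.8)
The plan is to prove both directions by working modulo powers of $2$ and using the structure of the multiplicative group $(\Z/2^k\Z)^*$. For the converse direction (items 1) and 2) imply Eq. (\ref{e0.lemma2.determineCR})), I would argue as follows: if $A \equiv 2^\ell - 1 \pmod{2^\ell}$ and $s$ is odd, write $A = 2^\ell - 1 + 2^\ell m$ for some integer $m$, so $A \equiv -1 \pmod{2^\ell}$ in the sense that $A + 1 \equiv 0 \pmod{2^\ell}$. Actually more precisely $A \equiv 2^\ell - 1 \pmod{2^\ell}$ just says $A \equiv -1 \pmod{2^\ell}$. Then $A^s \equiv (-1)^s = -1 \equiv 2^\ell - 1 \pmod{2^\ell}$ since $s$ is odd. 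This direction is essentially immediate.

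For the forward direction, suppose $A^s \equiv 2^\ell - 1 \equiv -1 \pmod{2^\ell}$. First I would establish item 2): if $s$ were even, say $s = 2s'$, then $A^s = (A^{s'})^2$ is a square modulo $2^\ell$, but $-1$ is not a quadratic residue modulo $2^\ell$ for $\ell \ge 3$ (and one checks the $\ell = 2$ case, where $-1 \equiv 3 \pmod 4$ is also a nonsquare), giving a contradiction; hence $s$ is odd. Next, for item 1), I would use the reduction modulo $2^\ell$: since $s$ is odd and $\gcd(s, 2) = 1$, the map $x \mapsto x^s$ is a bijection on the odd residues modulo $2^\ell$ — indeed on the group $(\Z/2^\ell\Z)^*$ raising to an odd power is an automorphism because the group has order $2^{\ell-1}$. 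Therefore $A^s \equiv -1 \pmod{2^\ell}$ forces $A \equiv (-1)^{1/s} \pmod{2^\ell}$; but $(-1)$ has order $2$ in $(\Z/2^\ell\Z)^*$, so its unique $s$-th root (for odd $s$) is $-1$ itself. Hence $A \equiv -1 \equiv 2^\ell - 1 \pmod{2^\ell}$, which is item 1).

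Alternatively, and perhaps more in keeping with the elementary style of Lemma~\ref{lemma1.determineCR}, I would prove item 1) by induction on $\ell \ge 2$ using the $2$-adic expansion of $A$. The base case $\ell = 2$ is direct: $A^s \equiv 3 \pmod 4$ with $A$ odd and $s$ odd forces $A \equiv 3 \pmod 4$ (since $A \equiv 1 \pmod 4$ gives $A^s \equiv 1$). For the inductive step, assuming the claim for $\ell$, suppose $A^s \equiv 2^{\ell+1} - 1 \pmod{2^{\ell+1}}$; reducing mod $2^\ell$ gives $A^s \equiv 2^\ell - 1 \pmod{2^\ell}$, so by induction $A \equiv 2^\ell - 1 \pmod{2^\ell}$, i.e. $A = 2^\ell - 1 + 2^\ell \epsilon + (\text{higher})$ with $\epsilon \in \{0,1\}$ the next bit, and one expands $A^s$ using $s$ odd, tracking only the coefficient of $2^\ell$ modulo $2$, to pin down $\epsilon = 1$.

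The main obstacle I expect is the bookkeeping in the inductive step of the elementary approach: correctly expanding $(2^\ell - 1 + 2^\ell \epsilon)^s$ modulo $2^{\ell+1}$ and verifying that the $2^\ell$-coefficient of $A^s$ equals $1 + \epsilon + (s-1)/2$ or similar, so that $A^s \equiv -1 \pmod{2^{\ell+1}}$ pins down $\epsilon$. The slick group-theoretic argument (raising to an odd power is an automorphism of $(\Z/2^\ell\Z)^*$) sidesteps this entirely, and I would present that as the cleaner route, relegating the induction to a remark if needed.
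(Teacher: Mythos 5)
Your proposal is correct, but your primary argument takes a genuinely different route from the paper's. The paper proves the necessity part by induction on $\ell$ (base case $\ell=2$): writing $A \equiv A_0 + 2^\ell A_1 \pmod{2^{\ell+1}}$ with $A_1 \in \{0,1\}$, it reduces modulo $2^\ell$ and applies the induction hypothesis to get $A_0 \equiv 2^\ell-1 \pmod{2^\ell}$ and $s$ odd, then rules out $A_1=0$ by observing that $A_1=0$ would give $A^2 \equiv (2^\ell-1)^2 \equiv 1 \pmod{2^{\ell+1}}$, hence $A^s \in \{1,\, 2^\ell-1\} \bmod 2^{\ell+1}$ for every $s$, never $2^{\ell+1}-1$. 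Your main argument instead works entirely inside $(\Z/2^\ell\Z)^*$: since $-1$ is a nonsquare modulo $2^\ell$ for $\ell \ge 2$, $s$ must be odd, and since $\gcd(s, 2^{\ell-1})=1$ for odd $s$, the map $x \mapsto x^s$ is an automorphism of $(\Z/2^\ell\Z)^*$, so the unique $s$-th root of $-1$ is $-1$ itself, giving $A \equiv -1 \equiv 2^\ell-1 \pmod{2^\ell}$. This is cleaner and sidesteps the digit bookkeeping; what the paper's induction buys is a uniform elementary template that it reuses almost verbatim for Lemma~\ref{lemma3.determineCR} (the analogous statement modulo $2^{\ell+1}$). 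Your sketched alternative induction is essentially the paper's proof, except that the binomial-expansion bookkeeping you anticipate as the main obstacle is unnecessary: once the induction hypothesis gives $A_0 \equiv -1 \pmod{2^\ell}$, the top bit is pinned down by the order-two observation $A^2 \equiv 1 \pmod{2^{\ell+1}}$ rather than by expanding $A^s$.
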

\begin{proof}
The converse statement is clear. We use induction on $\ell$ for the necessity part. The proof is clear if $\ell=2$. First assume that the step $\ell$ holds by induction hypothesis. Regarding the step $\ell +1$, assume that there exists an integer $s \ge 1$ such that
\be \label{ep0.lemma2.determineCR}
A^s \equiv 2^{\ell +1} -1 \!\!\!\!\! \pmod{2^{\ell+1}}.
\ee
Let $A_0$ and $A_1$ be the integers such that
\be \label{ep1.lemma2.determineCR}
A \equiv A_0 + 2^\ell A_1 \!\!\!\!\! \pmod{2^{\ell+1}} \;\; \mbox{with} \;\; 0 \le A_0 \le 2^{\ell -1} \;\; \mbox{and} \;\; 0 \le A_1 \le 1.
\ee
Note that $A_0$ is an odd integer. Combining Eq. (\ref{ep0.lemma2.determineCR}), Eq. (\ref{ep1.lemma2.determineCR}) and considering modulo $2^\ell$, we obtain
\be \label{ep2.lemma2.determineCR}
A_0^s \equiv (A_0 + 2^\ell A_1)^s \equiv -1 \equiv 2^\ell -1  \!\!\!\!\! \pmod{2^{\ell}}.
\ee
Applying the induction hypothesis of step $\ell$ for $A_0$ (note that $A_0$ is a positive odd integer) in Eq. (\ref{ep2.lemma2.determineCR}) we obtain that $A_0 \equiv 2^\ell -1 \pmod{2^{\ell}}$ and $s$ is odd. It remains to show that $A_1 \neq 0$. Assume the contrary that $A_1=0$, which implies that
\be \label{ep3.lemma2.determineCR}
A \equiv 2^\ell -1 \!\!\!\!\! \pmod{2^{\ell+1}}
\ee
and
\be \label{ep4.lemma2.determineCR}
A^2 \equiv (2^\ell -1)^2 \equiv 2^{2\ell} -2^{\ell +1} +1 \equiv 1 \!\!\!\!\! \pmod{2^{\ell+1}}.
\ee
Combining Eq. (\ref{ep3.lemma2.determineCR}) and Eq. (\ref{ep4.lemma2.determineCR}) we conclude that $A^s \not \equiv 2^{\ell +1} -1  \pmod{2^{\ell+1}}$, which is a contradiction to Eq. (\ref{ep0.lemma2.determineCR}). Hence $A_1=1$ and we have
\be
A \equiv 2^\ell -1 + 2^\ell \equiv 2^{\ell+1} -1 \!\!\!\!\! \pmod{2^{\ell+1}},
\nn\ee
which completes the proof.
\end{proof}

Using Lemma \ref{lemma2.determineCR} we prove the following lemma.

\begin{lemma} \label{lemma3.determineCR}
Let $A \ge 1$ be an odd integer.
Let $\ell \ge 2$ be an integer. If there exists an integer $s \ge 1$ such that
\be\label{e0.lemma3.determineCR}
A^s \equiv 2^\ell -1 \!\!\!\!\! \pmod{2^{\ell+1}},
\ee
then we have
\begin{itemize}
\item[{\rm 1)}] $A \equiv 2^\ell -1 \pmod{2^{\ell+1}}$, and
\item[{\rm 2)}] $s$ is odd.
\end{itemize}
Conversely if the items {\rm 1)} and {\rm 2)} above hold, then Eq. (\ref{e0.lemma3.determineCR}) holds.
\end{lemma}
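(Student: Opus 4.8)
The plan is to reduce the hypothesis modulo $2^\ell$, apply Lemma \ref{lemma2.determineCR} to get item 2) together with a mod-$2^\ell$ version of item 1), and then sharpen the congruence from modulus $2^\ell$ to modulus $2^{\ell+1}$ by exploiting that $s$ is odd.

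First I would dispose of the converse, which is immediate: if $A \equiv 2^\ell - 1 \pmod{2^{\ell+1}}$, then since $2\ell \ge \ell+1$ we have $A^2 \equiv (2^\ell-1)^2 = 2^{2\ell} - 2^{\ell+1} + 1 \equiv 1 \pmod{2^{\ell+1}}$; writing the odd integer $s$ as $s = 2k+1$ gives $A^s = (A^2)^k A \equiv A \equiv 2^\ell - 1 \pmod{2^{\ell+1}}$.

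For the forward direction, suppose there is $s \ge 1$ with $A^s \equiv 2^\ell - 1 \pmod{2^{\ell+1}}$. Reducing modulo $2^\ell$ yields $A^s \equiv 2^\ell - 1 \pmod{2^\ell}$, so Lemma \ref{lemma2.determineCR} applies (with the same $A$ and $\ell$) and gives at once that $s$ is odd, which is item 2), and that $A \equiv 2^\ell - 1 \pmod{2^\ell}$. The latter only tells us that $A$ is congruent modulo $2^{\ell+1}$ to one of the two lifts $2^\ell - 1$ or $2^\ell - 1 + 2^\ell = 2^{\ell+1} - 1$, and the remaining task is to rule out the second one.

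To exclude it, assume for contradiction that $A \equiv 2^{\ell+1} - 1 \equiv -1 \pmod{2^{\ell+1}}$. Since $s$ is odd, $A^s \equiv (-1)^s = -1 \equiv 2^{\ell+1} - 1 \pmod{2^{\ell+1}}$, and comparing with the hypothesis $A^s \equiv 2^\ell - 1 \pmod{2^{\ell+1}}$ forces $2^{\ell+1} \mid (2^{\ell+1}-1) - (2^\ell - 1) = 2^\ell$, which is false. Hence $A \equiv 2^\ell - 1 \pmod{2^{\ell+1}}$, giving item 1) and finishing the proof. There is no real obstacle here beyond Lemma \ref{lemma2.determineCR} itself; the only point requiring a little care is the bookkeeping of which of the two lifts modulo $2^{\ell+1}$ survives, and this is decided entirely by the parity of $s$.
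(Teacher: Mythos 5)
Your proof is correct and follows essentially the same route as the paper's: reduce modulo $2^\ell$, invoke Lemma \ref{lemma2.determineCR} to get that $s$ is odd and $A \equiv 2^\ell - 1 \pmod{2^\ell}$, and then eliminate the lift $A \equiv -1 \pmod{2^{\ell+1}}$ by noting it would force $A^s \equiv -1 \not\equiv 2^\ell - 1 \pmod{2^{\ell+1}}$. The only cosmetic difference is that you use the oddness of $s$ directly to get $A^s \equiv -1$, while the paper observes $A^s \in \{1,-1\} \pmod{2^{\ell+1}}$ for every $s$; both close the argument identically.
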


\begin{proof}
The converse statement is clear. We prove the necessity part using Lemma \ref{lemma2.determineCR}. Using Lemma \ref{lemma2.determineCR} and considering Eq. (\ref{e0.lemma3.determineCR}) modulo $2^\ell$ we obtain that $s$ is odd and
\be
A \equiv 2^\ell -1 \!\!\!\!\! \pmod{2^{\ell}}.
\nn\ee
Let $0 \le A_1 \le 1$ be the integer such that
\be
A \equiv 2^\ell -1 + 2^\ell A_1 \!\!\!\!\! \pmod{2^{\ell+1}}.
\nn\ee
Next we show that $A_1=0$. Assume the contrary that $A_1=1$, which implies that 
 \begin{align}\label{ep1-determineCR}
 A \equiv 2^\ell -1 + 2^\ell \equiv -1 \!\!\!\!\! \pmod{2^{\ell+1}},
 \end{align}
and
\begin{align}\label{ep2-determineCR}
    A^2 \equiv 1 \!\!\!\!\! \pmod{2^{\ell+1}}.
   \end{align}

Combining Eq. (\ref{ep1-determineCR}) and Eq. (\ref{ep2-determineCR}), we obtain that $A^s \not \equiv 2^\ell -1 \pmod{2^{\ell+1}}$ for any integer $s \ge 1$, which is a contradiction to Eq. (\ref{e0.lemma3.determineCR}). This implies that $A_1=0$ and
$$A\equiv 2^\ell -1 \!\!\!\!\! \pmod{2^{\ell+1}},$$
which completes the proof.
\end{proof}

Let $\ell \ge 3$ be an integer. Using Lemmas \ref{lemma1.determineCR}, \ref{lemma2.determineCR} and \ref{lemma3.determineCR}, we assume that
\be \label{Assumption1}
q_0 \equiv 2^\ell -1 \!\!\!\!\! \pmod{2^{\ell+1}}
\ee
and $s \ge 1$ is an odd integer with
\be \label{Assumption2}
q=q_0^s
\ee
{\it without loss of generality}.  We note that the phrase ``without loss of generality" here refers to Lemmas \ref{lemma1.determineCR}, \ref{lemma2.determineCR} and \ref{lemma3.determineCR}.

\section{Covering radius is at most $3$} \label{section.upper bounds}
Recall that $\ell \ge 3$ is an integer. Using Eq. (\ref{Assumption1}) and Eq. (\ref{Assumption2}) we obtain that
\be \label{Assumption3}
q\equiv 2^\ell -1 \!\!\!\!\! \pmod{2^{\ell+1}}.
\ee
Let $\theta$ be a primitive $2^\ell$-th root of $1$ in $\F_{q^2}^*$. Using Eq. (\ref{Assumption3}) we get that $\theta \in \F_{q^2} \setminus \F_q$.

Let $H$ be the multiplicative subgroup of $\F_{q^2}^*$ with $|H|=q+1$. Put $m=(q_0-1)/2$. Let $H_m$ be the multiplicative subgroup of $\F_{q^2}^*$ with $|H_m|=m(q+1)$.

For index $0 \le i \le 2^\ell -1$, let Property Pi be the property defined as follows:
\begin{definition}
Property Pi: For each $\gamma \in \F_{q^2}^*$ such that $\gamma^q=\theta^i \gamma$, there exist $h_1,h_2,h_3 \in H_m$ such that
\be
h_1+h_2 + h_3=\gamma.
\nn\ee
\end{definition}

Recall that the covering radius of a linear code is the least integer $\rho$ such that every column vector of $n-k$ entries is a sum of some $\rho$ or fewer columns of its parity check matrix.

\begin{theorem} \label{theorem.upper1}
Let $\ell \ge 3$ be an integer. Let $\F_{q_0}$ be a finite field such that Eq. (\ref{Assumption1}) holds. Let $s \ge 1$ be an odd integer and put $q=q_0^s$. Then the covering radius of $\mathcal{C}_s(q_0)$ is at most $3$ if Property Pi holds for each index $0 \le i \le 2^{\ell} -1$.
\end{theorem}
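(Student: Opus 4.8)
The plan is to turn the covering‑radius bound into a purely additive statement about the subgroup $H_m$, and then to normalise an arbitrary syndrome so that one of the Properties P$i$ can be invoked. Since $\phi$ is $\F_{q_0}$‑linear and bijective, a syndrome of $\mathcal{C}_s(q_0)$ is an element $\gamma\in\F_{q^2}$, and an $\F_{q_0}$‑linear combination $\sum_j c_j h_{k_j}$ of columns of $P$ (with $c_j\in\F_{q_0}^*$) corresponds to $\sum_j c_j h_{k_j}\in\F_{q^2}$, each summand $c_j h_{k_j}$ lying in $\F_{q_0}^*H$. I would first record that $\F_{q_0}^*\cap H=\{\pm1\}$: indeed $q_0^s+1\equiv 2\pmod{q_0-1}$ forces $\gcd(q_0-1,q+1)=2$, so $|\F_{q_0}^*H|=(q_0-1)(q+1)/2=m(q+1)$, and since $\F_{q^2}^*$ is cyclic this gives $\F_{q_0}^*H=H_m$. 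Hence $\rho(\mathcal{C}_s(q_0))\le 3$ is equivalent to the assertion that every $\gamma\in\F_{q^2}$ is a sum of at most three elements of $H_m$; the zero syndrome is trivially covered (it is the syndrome of a codeword), so it suffices to prove $\gamma\in H_m+H_m+H_m$ for every $\gamma\in\F_{q^2}^*$.

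Next I would carry out the reduction. For $\gamma\in\F_{q^2}^*$ one has $\gamma^{q-1}\in H$ since $(\gamma^{q-1})^{q+1}=1$. By Eq.~(\ref{Assumption3}), $q+1\equiv 2^\ell\pmod{2^{\ell+1}}$, so $2^\ell$ exactly divides $q+1$; consequently $\theta\in H$ (as $\theta^{q+1}=1$) and $\langle\theta\rangle$ is precisely the $2$‑Sylow subgroup of the cyclic group $H$, so that $H=\langle\theta\rangle\times H_{\mathrm{odd}}$ with $|H_{\mathrm{odd}}|$ odd. Write $\gamma^{q-1}=\theta^i\omega$ with $0\le i\le 2^\ell-1$ and $\omega\in H_{\mathrm{odd}}$. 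Because $q$ is odd, the $(q-1)$‑power map on $H$ has image the unique index‑$2$ subgroup $H^2$ (here $\gcd(q-1,q+1)=2$), and $H_{\mathrm{odd}}\subseteq H^2$; so there is $g\in H$ with $g^{q-1}=\omega^{-1}$. Then $(g\gamma)^{q-1}=g^{q-1}\gamma^{q-1}=\theta^i$, i.e.\ $(g\gamma)^q=\theta^i(g\gamma)$, so Property P$i$ applies to $g\gamma$ and yields $h_1,h_2,h_3\in H_m$ with $h_1+h_2+h_3=g\gamma$. Multiplying by $g^{-1}\in H\subseteq H_m$ gives $\gamma=g^{-1}h_1+g^{-1}h_2+g^{-1}h_3$ with each $g^{-1}h_j\in H_m$, hence $\gamma\in H_m+H_m+H_m$. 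Writing each $g^{-1}h_j$ as $c_j h_{k_j}$ with $c_j\in\F_{q_0}^*$ and coalescing any repeated columns (which can only decrease the number used) exhibits the syndrome $\gamma$ as a combination of at most three columns of $P$, giving $\rho(\mathcal{C}_s(q_0))\le 3$.

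I expect no obstacle inside this argument: it is the purely book‑keeping half of the story. The genuine difficulty is deferred, namely verifying the Properties P$i$ themselves, which is exactly the long technical part requiring one to count solutions of $h_1+h_2+h_3=\gamma$ with $h_j\in H_m$ via Weil‑type bounds for the associated curves. The only point in the present proof that needs a little care is the structural claim that the odd part of any element of $H$ is a $(q-1)$‑th power inside $H$ (equivalently $H_{\mathrm{odd}}\subseteq H^2$) together with $\langle\theta\rangle$ being the full $2$‑Sylow subgroup of $H$; both rest on the $2$‑adic fact that $2^\ell$ exactly divides $q+1$, which comes from Eqs.~(\ref{Assumption1})–(\ref{Assumption2}) and the oddness of $s$. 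Everything else is immediate from the cyclicity of $\F_{q^2}^*$ and the $\F_{q_0}$‑linearity of $\phi$.
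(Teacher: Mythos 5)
Your proposal is correct and follows essentially the same route as the paper: normalize an arbitrary nonzero syndrome $\mu\in\F_{q^2}^*$ by a factor from $H$ so that it satisfies $\gamma^q=\theta^i\gamma$ for some $i$, invoke Property P$i$, and use $\F_{q_0}^*\cdot H=H_m$ to convert the three elements of $H_m$ into at most three scaled columns of $P$. The only difference is cosmetic: the paper justifies the decomposition $\mu=\gamma h$ (with $\gamma\in G_i$, $h\in H$) by computing $\gcd(|G|,|H|)=2^\ell$ and hence $\mathrm{lcm}(|G|,|H|)=q^2-1$, whereas you obtain the same normalization via the $(q-1)$-power map on $H$ together with the Sylow decomposition $H=\langle\theta\rangle\times H_{\mathrm{odd}}$, both resting on the same fact that $2^\ell$ exactly divides $q+1$.
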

\begin{proof}
Note that the covering radius of $\mathcal{C}_s(q_0)$ is at most $3$ if for each given $\mu \in \F_{q^2}$, there exist $c_1,c_2,c_3 \in \F_{q_0}^*$ and $h_1,h_2,h_3 \in H$ such that
\be
c_1 h_1 + c_2h_2 + c_3 h_3 = \mu.
\nn\ee

If $\mu=0$, then this is trivial. Hence we assume that $\mu \in \F_{q^2}^*$ without loss of generality.
For $0 \le i \le 2^\ell -1$, let $G_i$ be the subset defined as
\be
G_i=\{\gamma \in \F_{q^2}^*: \gamma^q = \theta^i \gamma\}.
\nn\ee
Note that $|G_i|=q-1 \;\; \mbox{and} \;\; G_i \cap G_j=\emptyset$
for $0 \le i  < j  \le 2^\ell -1$.
Let $G \subseteq \F_{q^2}^*$ be the disjoint union
\be
G=\bigcup_{i=0}^{2^\ell -1} G_i.
\nn\ee
We observe that $G$ is a multiplicative subgroup of $\F_{q^2}^*$ and
$|G|=2^\ell (q-1).$
Next we show that
\be
\gcd(|G|,|H|)= 2^\ell.
\nn\ee
Indeed we observe that $\frac{q+1}{2^\ell}$ is an integer by the assumption in Eq. (\ref{Assumption3}) and we have $2=2^\ell \frac{q+1}{2^\ell} -(q-1)$. Hence $\gcd(\frac{q+1}{2^\ell},q-1)$ is either $1$ or $2$. Moreover $2 \nmid \frac{q+1}{2^\ell}$ as $q +1\equiv 2^\ell \pmod{2^{\ell+1}}$. These arguments imply that $\gcd(\frac{q+1}{2^\ell},q-1)=1$ and
\be
\gcd(|G|,|H|)=2^\ell \gcd\left(q-1, \frac{q+1}{2^\ell}\right)=2^\ell.
\nn\ee
Therefore we obtain that
\be
{\rm lcm}(|G|,|H|)=\frac{2^\ell (q-1)(q+1)}{2^\ell}=q^2-1.
\nn\ee

These arguments imply that if $\mu \in \F_{q^2}^*$, then there exists a unique index $0 \le i \le 2^\ell -1$ such that $\mu=\gamma h$ with $\gamma \in G_i$ and $h \in H$.
Note that if $\gamma \in G_i$ and $h \in H$, then there exist $c_1,c_2,c_3 \in \F_{q_0}^*$ and $h_1,h_2,h_3 \in H$ satisfying
\be
c_1h_1 + c_2 h_2 + c_3 h_3 = \gamma h
\nn\ee
if and only if there exist $\bar{h}_1,\bar{h}_2,\bar{h_3} \in H$ satisfying
\be
c_1\bar{h}_1 + c_2 \bar{h}_2 + c_3 \bar{h}_3 = \gamma.
\nn\ee
Indeed we may choose $\bar{h}_1=h_1/h$, $\bar{h}_2=h_2/h$ and $\bar{h}_3=h_3/h$. Observing $\F_{q_0}^*\cdot H=H_m$ we complete the proof.
\end{proof}

Next we present two technical propositions.

\begin{proposition} \label{proposition.upper1.even}
We keep the notation and assumptions of Theorem \ref{theorem.upper1}. Assume that $0 \le i \le 2^{\ell} -1$ is an even integer. Property Pi holds if the following holds: For each $\alpha \in \F_q^*$, there exist $x_1,x_2,x_3,y_1,y_2,y_3 \in \F_q$ satisfying
\begin{align*}
x_1 + x_2 + x_3 & =  \alpha, \\
y_1 + y_2 + y_3 & =  0, \\
x_1^2 + y_1^2 & =  1, \\
x_2^2 + y_2^2 & = 1, \\
x_3^2 + y_3^2 & =  1.
\end{align*}
\end{proposition}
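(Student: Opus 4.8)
The plan is to reduce Property~Pi for even $i$ to the displayed system of five equations by exhibiting an explicit change of coordinates on $\F_{q^2}$ that turns the constraint ``$h_j \in H_m$'' into a pair of $\F_q$-equations of norm type, and the constraint ``$h_1+h_2+h_3 = \gamma$'' into the two linear equations $\sum x_j = \alpha$, $\sum y_j = 0$. First I would fix the even index $i=2i'$ and choose $\eta \in \F_{q^2}^*$ with $\eta^q = \theta^{i'}\eta$, which exists since $\gamma \mapsto \gamma^q/\gamma$ maps $\F_{q^2}^*$ onto the group of $(q+1)$-st roots of unity and $\theta^{i'}$ lies in that group (here one uses $q \equiv 2^\ell - 1 \pmod{2^{\ell+1}}$, so $2^\ell \mid q+1$ and $\theta^{i'}$ has order dividing $2^\ell \mid q+1$). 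Then for $\gamma \in \F_{q^2}^*$ with $\gamma^q = \theta^i \gamma = \theta^{2i'}\gamma$, the element $\gamma/\eta^2$ satisfies $(\gamma/\eta^2)^q = \gamma/\eta^2$, i.e. $\gamma/\eta^2 \in \F_q$; call it (a scalar multiple of) $\alpha$. So it suffices to write $\eta^2 \alpha$ as a sum of three elements of $H_m$.

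Next I would parametrize $H_m$. Since $|H_m| = m(q+1) = \frac{q_0-1}{2}(q+1)$ and $H_m = \F_{q_0}^*\cdot H$, an element $h \in H_m$ is of the form $h = c\,u$ with $c \in \F_{q_0}^*$ and $u \in H$, i.e. $u^{q+1}=1$. Writing $h = \eta\,(x + y\omega)$ for a suitable $\F_q$-basis $\{1,\omega\}$ of $\F_{q^2}$ chosen so that $\omega^q = -\omega$ and $\omega^2 \in \F_q$ (possible since $q$ is odd, taking $\omega$ a square root of a non-square), the condition that $h/\eta$ be a scalar multiple of an element of $H$ translates, via the norm $N_{\F_{q^2}/\F_q}$, into $h^{q+1} \in \F_{q_0}^*\cdot(\text{norm of }\eta)$ up to squares; after rescaling $\omega$ and absorbing constants one arrives exactly at $x^2 + y^2 = 1$ (the ``$1$'' being the normalized value). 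Dividing the target $\eta^2\alpha$ by $\eta$ and expanding in the basis $\{1,\omega\}$ then gives the two linear conditions $x_1+x_2+x_3 = \alpha$ and $y_1+y_2+y_3 = 0$. I would carry out this bookkeeping carefully, tracking how the various constants (the norm of $\eta$, the value $\omega^2$, the factor $m$) get absorbed; the cleanest route is probably to first prove it for a convenient normalization and then note the general case follows by scaling.

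The main obstacle I anticipate is precisely this normalization step: showing that the scalars appearing — the norm $N_{\F_{q^2}/\F_q}(\eta)$, the discriminant-type constant $\omega^2 \in \F_q^*$, and the index-$m$ subgroup structure — can all be simultaneously absorbed so that the norm form becomes $x^2+y^2$ with right-hand side exactly $1$, uniformly for all $\alpha \in \F_q^*$. One has to check that the relevant constant is a square in $\F_q^*$ (or, if not, that the resulting conic $x^2 + y^2 = c$ is still isomorphic to $x^2+y^2=1$ over $\F_q$, which holds as long as $c \neq 0$ since any two anisotropic-or-hyperbolic conics of the same type over a finite field are equivalent), and that the rescaling is compatible with membership in $H_m$ rather than merely $H$ — this is where the factor $m = (q_0-1)/2$ and the evenness of $i$ interact. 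Once the coordinates are set up correctly, the equivalence is a direct translation, and the existence of the $x_j, y_j$ is exactly the hypothesis of the proposition.
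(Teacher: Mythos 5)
Your overall strategy --- choose a basis of $\F_{q^2}$ over $\F_q$ adapted to $\theta^i$ so that $h^{q+1}$ diagonalizes into a binary quadratic form while the target $\gamma$ acquires second coordinate zero, then rescale --- is exactly the paper's route. But two points in your setup fail as written, and they are precisely where the content of the proof lives. First, a concrete coordinate mismatch: you pick $\eta$ with $\eta^{q-1}=\theta^{i'}$ (so $(\eta^2)^{q-1}=\theta^i$), write $h=\eta(x+y\omega)$, and claim that dividing the target $\eta^2\alpha$ by $\eta$ and expanding in $\{1,\omega\}$ yields $\sum x_j=\alpha$ and $\sum y_j=0$. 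It does not: $\eta^2\alpha/\eta=\eta\alpha$, and since $\eta\notin\F_q$ its expansion $e_0\alpha+e_1\alpha\,\omega$ has $e_1\neq 0$, so the second linear equation becomes $\sum y_j=e_1\alpha\neq 0$. The fix is to scale the basis by an element whose $(q-1)$-st power is $\theta^i$ itself: the paper takes $\omega_1=g^{\frac{q+1}{2^\ell}i}$ and $\omega_2=\omega_1 g^{\frac{q+1}{2}}$, so that $\gamma=\alpha\omega_1+0\cdot\omega_2$ and $h^{q+1}=D_1x^2+D_2y^2$ with $D_1=\omega_1^{q+1}$, $D_2=\omega_2^{q+1}$ (the cross term vanishes because $\omega_1^{q-1}+\omega_2^{q-1}=\theta^i-\theta^i=0$).

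Second, the normalization you yourself flag as the main obstacle is the actual crux, and your proposed fallback (equivalence of the conics $x^2+y^2=c$ over $\F_q$) would not rescue it: the linear map identifying two such conics is multiplication by an element of $\F_{q^2}^*$, which does not preserve the shape of the constraints $\sum x_j=\alpha$, $\sum y_j=0$. What is actually needed --- and what the paper proves --- is that $D_1$ and $D_2$ are each nonzero squares in $\F_q^*$, so that the coordinatewise substitutions $x_jE_1\mapsto x_j$, $y_jE_2\mapsto y_j$ suffice; this follows from $i$ being even together with $\frac{q+1}{2^\ell}$ odd and $\frac{q+1}{2}$ even, which is where $\ell\ge 3$ and $q\equiv 2^\ell-1\pmod{2^{\ell+1}}$ enter. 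Finally, your worry about compatibility with $H_m$ rather than $H$ is a non-issue here: the proposition is a sufficiency statement, $H\subseteq H_m$, and the paper simply produces the $h_j$ inside $H$, so the right-hand side of the norm equation is exactly $1$. With the basis corrected and the square-ness of $D_1,D_2$ established, your plan coincides with the paper's proof.
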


\begin{proof}
Let $g$ be a generator of $\F_{q^2}^*$. Put $g_1=g^{q+1}$. Note that $g_1$ is a generator of $\F_q^*$. We have
$\theta= g^{ \frac{q^2-1}{2^\ell}}$
without loss of generality as it is a primitive $2^\ell$-th root of $1$. Put
\be
\omega_1= g^{ \frac{q+1}{2^\ell}i} \;\; \mbox{and} \;\;
\omega_2= g^{ \frac{q+1}{2^\ell}i}g^{\frac{q+1}{2}}.
\nn\ee
Note that
\be \label{ep1.proposition.upper1.even}
\omega_1^{q-1}=\theta^i \;\; \mbox{and} \;\;
\omega_2^{q-1}=-\theta^i.
\ee

It is clear that $\omega_1, \omega_2 \in \F_{q^2}^*$ and $(\omega_1/\omega_2)^{q-1}=-1$, which imply that $\{\omega_1,\omega_2\}$ is a basis of $\F_{q^2}$ over $\F_q$. For $h_1, h_2, h_3\in \F_{q^2}$, let
\begin{align*}
  h_1 & = x_1 \omega_1 + y_1 \omega_2,\\
  h_2 & =x_2 \omega_1 + y_2 \omega_2,\\
  h_3&= x_3 \omega_1 + y_3 \omega_2,
\end{align*}
with $x_1,y_1,x_2,y_2,x_3,y_3 \in \F_q$. Note that if $\gamma \in \F_{q^2}^*$ with $\gamma^q = \theta^i \gamma$, then
\be \label{ep1new.proposition.upper1.even}
\gamma = \alpha \omega_1 + 0 \cdot \omega_2,
\ee
where $\alpha=\gamma/\omega_1 \in \F_q^*$. Indeed $\gamma^{q-1}=\theta^i$ by definition of Property NPi and the choice of $\gamma$.
Then using Eq. (\ref{ep1.proposition.upper1.even}) we obtain
\be
\alpha^{q-1}=\left(\frac{\gamma}{\omega_1}\right)^{q-1}=\frac{\theta^i}{\theta^i}=1,
\nn\ee
and hence $\alpha \in \F_q^*$. These arguments imply that Eq. (\ref{ep1new.proposition.upper1.even}) is the expansion of $\gamma \in \F_{q^2}^*$ with respect to the basis $\{\omega_1,\omega_2\}$ over $\F_q$.
Put $D_1=\omega_1^{q+1}$ and $D_2=\omega_2^{q+1}$. Note that
\be \label{ep2.proposition.upper1.even}
D_1=g_1^{\frac{q+1}{2^\ell}i} \;\; \mbox{and} \;\;
D_2=g_1^{\frac{q+1}{2^\ell}i}g_1^{\frac{q+1}{2}}.
\ee
We have that $h_1 \in H$ if and only if
\begin{align*}
  1 & =h_1^{q+1} \\
   & =(x_1\omega_1 + y_1 \omega_2)^q (x_1 \omega_1 + y_1 \omega_2) \\
   & =x_1^2\omega_1^{q+1} + y_1^2 \omega_2^{q+1} + x_1y_1\omega_1^q\omega_2 + x_1y_1\omega_1\omega_2^q \\
   & =x_1^2D_1 + y_1^2 D_2  + x_1y_1\omega_1 \omega_2 (\omega_1^{q-1} + \omega_2^{q-1}) \\
   & =x_1^2 D_1 + y_1^2 D_2.
\end{align*}

 Here we use the fact that $\omega_1^{q-1} + \omega_2^{q-1}=\theta^i-\theta^i=0$ using Eq. (\ref{ep1.proposition.upper1.even}).
Note that $H \subseteq H_m$. These arguments imply that Property Pi holds if the following holds: There exists $\alpha \in \F_q^*$ such that the system
\begin{align}\label{ep3.proposition.upper1.even}
x_1 + x_2  &=  \alpha,\nonumber \\
y_1 + y_2  &=  0,\nonumber \\ 
D_1 x_1^2 + D_2 y_1^2  &=  1, \\ 
D_1 x_2^2 + D_2 y_2^2 & =  1,\nonumber \\
D_1 x_3^2 + D_2 y_3^2  &=  1\nonumber 
\end{align}
is solvable with $x_1,x_2,x_3,y_1,y_2,y_3 \in \F_q$.
Note that
\be
\frac{q+1}{2^\ell} \; \mbox{is odd,  and } \; \frac{q+1}{2} \; \mbox{is even},
\nn\ee
as $\ell \ge 3$ and Eq. (\ref{Assumption3}) holds. Therefore we obtain that
\be \label{ep4.proposition.upper1.even}
D_1  \; \mbox{and} \;D_2 \;  \mbox{are nonzero squares in} \; \F_q^*,
\ee
as $i$ is even, where we use Eq. (\ref{ep2.proposition.upper1.even}).

Putting  $D_1=E_1^2$, $D_2=E_2^2$ with $E_1,E_2 \in \F_q^*$ and using the change of variables $x_1 E_1 \mapsto x_1$, $x_2 E_1 \mapsto x_2$, $x_3 E_1 \mapsto x_3$,
$y_1 E_2 \mapsto y_1$, $y_2 E_2 \mapsto y_2$  and $y_3 E_2 \mapsto y_3$ we complete the proof.
\end{proof}

\begin{proposition} \label{proposition.upper1.odd}
We keep the notation and assumptions of Theorem \ref{theorem.upper1}. Assume that $0 \le i \le 2^{\ell} -1$ is an odd integer.
Let $D \in \F_q^*$ be a non-square in $\F_q$.
Property Pi holds if the following holds: For each $\alpha \in \F_q^*$, there exist $x_1,x_2,x_3,y_1,y_2,y_3 \in \F_q$ satisfying
\begin{align*}
x_1 + x_2 + x_3 & =  \alpha, \\ 
y_1 + y_2 + y_3 & =  0, \\ 
x_1^2 + y_1^2 & =  D, \\ 
x_2^2 + y_2^2 & = D, \\ 
x_3^2 + y_3^2 & =  D.
\end{align*}
\end{proposition}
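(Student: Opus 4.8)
The plan is to follow the proof of Proposition~\ref{proposition.upper1.even} essentially verbatim, the only genuine change being the normalization of the conics at the end, where the hypothesis ``$i$ odd'' pushes the coefficients into the non-square coset. Concretely, I would reuse the same setup: fix a generator $g$ of $\F_{q^2}^*$, set $g_1=g^{q+1}$ (a generator of $\F_q^*$), take $\theta=g^{(q^2-1)/2^\ell}$, and put $\omega_1=g^{\frac{q+1}{2^\ell}i}$ and $\omega_2=g^{\frac{q+1}{2^\ell}i}g^{\frac{q+1}{2}}$. The identities $\omega_1^{q-1}=\theta^i$ and $\omega_2^{q-1}=-\theta^i$ hold word for word, so $\{\omega_1,\omega_2\}$ is again an $\F_q$-basis of $\F_{q^2}$; any $\gamma\in\F_{q^2}^*$ with $\gamma^q=\theta^i\gamma$ satisfies $\gamma=\alpha\omega_1$ with $\alpha=\gamma/\omega_1\in\F_q^*$; and writing $h_j=x_j\omega_1+y_j\omega_2$ with $x_j,y_j\in\F_q$ gives, with $D_1=\omega_1^{q+1}$ and $D_2=\omega_2^{q+1}$, that $h_j\in H$ exactly when $D_1x_j^2+D_2y_j^2=1$. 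Since $H\subseteq H_m$, Property Pi follows once, for every $\alpha\in\F_q^*$, the system $x_1+x_2+x_3=\alpha$, $y_1+y_2+y_3=0$, $D_1x_j^2+D_2y_j^2=1$ $(j=1,2,3)$ is solvable over $\F_q$.

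The only place the parity of $i$ enters is the nature of $D_1,D_2$. Here I would use $D_1=g_1^{\frac{q+1}{2^\ell}i}$ and $D_2=D_1\,g_1^{\frac{q+1}{2}}$ together with the facts already recorded in the excerpt, that $\frac{q+1}{2^\ell}$ is odd and $\frac{q+1}{2}$ is even (valid because $\ell\ge 3$ and $q\equiv 2^\ell-1\pmod{2^{\ell+1}}$). As $i$ is odd, $\frac{q+1}{2^\ell}i$ is odd, so $D_1$ is a non-square in $\F_q^*$; and $g_1^{\frac{q+1}{2}}$ is a square, so $D_2=D_1\cdot(\text{square})$ is also a non-square in $\F_q^*$. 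This is precisely the opposite of the even case, where both $D_1$ and $D_2$ were nonzero squares.

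To finish, I would normalize using the fixed non-square $D$: since non-square times non-square is a square, $DD_1$ and $DD_2$ are squares, say $DD_1=E_1^2$ and $DD_2=E_2^2$ with $E_1,E_2\in\F_q^*$. Multiplying each conic $D_1x_j^2+D_2y_j^2=1$ by $D$ and then applying the substitution $E_1x_j\mapsto x_j$, $E_2y_j\mapsto y_j$ turns the three conics into $x_j^2+y_j^2=D$, leaves $y_1+y_2+y_3=0$ untouched, and turns $x_1+x_2+x_3=\alpha$ into $x_1+x_2+x_3=E_1\alpha$. Since $\alpha\mapsto E_1\alpha$ is a bijection of $\F_q^*$ onto itself, the hypothesis ``for each $\alpha\in\F_q^*$ there exist $x_j,y_j$ with $\sum x_j=\alpha$, $\sum y_j=0$, $x_j^2+y_j^2=D$'' is exactly what is required, and the argument is complete.

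There is no new mathematical obstacle beyond Proposition~\ref{proposition.upper1.even}; the only points needing care are (i) checking that \emph{both} $D_1$ and $D_2$ lie in the non-square coset when $i$ is odd, and (ii) clearing the coefficients by a factor of $D$ rather than $D^{-1}$, so that the normalized conics read $x_j^2+y_j^2=D$ and not $x_j^2+y_j^2=D^{-1}$, while verifying that rescaling $\alpha$ to $E_1\alpha$ keeps the universal quantifier over $\F_q^*$ intact.
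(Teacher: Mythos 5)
Your proposal is correct and follows essentially the same route as the paper: the paper likewise reduces to the system $D_1x_j^2+D_2y_j^2=1$, observes that both $D_1$ and $D_2$ are non-squares when $i$ is odd, and writes $D_1=E_1^2/D$, $D_2=E_2^2/D$ (equivalent to your $DD_1=E_1^2$, $DD_2=E_2^2$) before applying the same change of variables. Your additional remarks about the parity of $\tfrac{q+1}{2^\ell}i$ and the harmless rescaling $\alpha\mapsto E_1\alpha$ are exactly the details the paper leaves implicit.
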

\begin{proof}
The proof is similar to the proof of Proposition \ref{proposition.upper1.even}. Using the same methods we obtain the equivalent system in Eq. (\ref{ep3.proposition.upper1.even}).
The first difference is that, instead of Eq. (\ref{ep4.proposition.upper1.even}), we have
\be
D_1  \; \mbox{and} \;D_2 \;  \mbox{are non-squares in} \; \F_q^*,
\nn\ee
as $i$ is odd.

Putting  $D_1=\frac{E_1^2}{D}$, $D_2=\frac{E_2^2}{D}$ with $E_1,E_2 \in \F_q^*$ and using the same change of variables as in the proof of Proposition \ref{proposition.upper1.even}, we complete the proof.
\end{proof}

Next we present two more technical propositions, which make the connections of Propositions \ref{proposition.upper1.even}, \ref{proposition.upper1.odd} to Weil's Sum methods we use below.

In the next proposition we exclude $\alpha \in \{-1,1\}$. We consider the case $\alpha \in \{-1,1\}$ separately in Lemma \ref{lemma.upper2.even.separate} below.

\begin{proposition} \label{proposition.upper2.even}
We keep the notation and assumptions of Theorem \ref{theorem.upper1}.
Let $\alpha \in \F_q^* \setminus \{-1,1\}$. Let $a(x),b(x),c(x) \in \F_q[x]$ be the polynomials given by
\begin{align*}
a(x) & =  2 \alpha x - \alpha^2 -1, \\ 
b(x) & =  2 \alpha x^2 + (-3 \alpha^2 -1) x + \alpha^3 + \alpha, \;\; \mbox{and} \\ 
c(x) & =  (-\alpha^2 -1) x^2 + (\alpha^3 + \alpha) x - \frac{\alpha^4}{4} - \frac{\alpha^2}{2} + \frac{3}{4}.
\end{align*}
Put
\be
\Delta(x)=b(x)^2 -4a(x)c(x) \in \F_q[x].
\nn\ee
Assume that there exists $x_1 \in \F_q$ such that
\begin{itemize}
\item[i)] $1 - x_1^2$ is a square in $\F_q$, 
\item[ii)] $a(x_1) \neq 0$, and 
\item[iii)] $\Delta(x_1)$ is a nonzero square in $\F_q$.
\end{itemize}

Then there exist $x_1,x_2,x_3,y_1,y_2,y_3 \in \F_q$ such that the system in Proposition \ref{proposition.upper1.even} holds.
\end{proposition}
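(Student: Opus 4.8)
The plan is to keep the hypothesized $x_1$ fixed, use (i) to manufacture a matching $y_1$, and then solve the remaining four equations of Proposition~\ref{proposition.upper1.even} by elimination, reducing the whole ``three points on the unit circle summing correctly'' system to a single quadratic in one unknown whose discriminant turns out to be exactly $\Delta(x_1)$. Concretely, since $1-x_1^2$ is a square in $\F_q$ by (i), I would fix $y_1\in\F_q$ with $y_1^2=1-x_1^2$, so that $x_1^2+y_1^2=1$, and set $\sigma=\alpha-x_1$, $\tau=-y_1$. It then suffices to produce $x_2,x_3,y_2,y_3\in\F_q$ with $x_2+x_3=\sigma$, $y_2+y_3=\tau$ and $x_j^2+y_j^2=1$ for $j=2,3$, since adjoining $(x_1,y_1)$ yields a solution of the stated system.

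Next I would eliminate $x_3=\sigma-x_2$ and $y_3=\tau-y_2$. Subtracting the two circle equations produces the linear relation $2\sigma x_2+2\tau y_2=\sigma^2+\tau^2$; write $S:=\sigma^2+\tau^2$. Using $\tau^2=1-x_1^2$ one gets $S=\alpha^2-2\alpha x_1+1=-a(x_1)$, so $S\neq 0$ by (ii); and a direct expansion gives the polynomial identity $\Delta(x)=(\alpha^2-2\alpha x+1)(1-x^2)\bigl(3-\alpha^2+2\alpha x\bigr)$, so (iii) also forces $1-x_1^2\neq 0$, i.e.\ $\tau\neq 0$. Solving the linear relation for $y_2=(S-2\sigma x_2)/(2\tau)$, substituting into $x_2^2+y_2^2=1$ and clearing the denominator $4\tau^2$, one obtains after collecting terms (and dividing by $-4$) precisely $a(x_1)x_2^2+b(x_1)x_2+c(x_1)=0$; the identifications $b(x_1)=S\sigma$ and $c(x_1)=\tau^2-\tfrac14 S^2$ are routine substitutions of $\sigma=\alpha-x_1$, $\tau^2=1-x_1^2$ into the stated polynomials $b(x),c(x)$.

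To finish, since $a(x_1)\neq 0$ and $\Delta(x_1)=b(x_1)^2-4a(x_1)c(x_1)$ is a nonzero square in $\F_q$ by (iii), the quadratic has a root $x_2=\bigl(-b(x_1)+\sqrt{\Delta(x_1)}\bigr)/\bigl(2a(x_1)\bigr)\in\F_q$. Putting $y_2=(S-2\sigma x_2)/(2\tau)$, $x_3=\sigma-x_2$, $y_3=\tau-y_2$, all in $\F_q$, back-substitution shows $x_2^2+y_2^2=1$ (this is how $x_2$ was built), and then $\sigma x_2+\tau y_2=\tfrac12 S$ gives $x_3^2+y_3^2=(x_2^2+y_2^2)+(\sigma^2+\tau^2)-2(\sigma x_2+\tau y_2)=1+S-S=1$, while $x_1+x_2+x_3=\alpha$ and $y_1+y_2+y_3=0$ hold by construction.

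The bulk of the work, and the only place errors could creep in, is the bookkeeping: verifying the three polynomial identities $a(x)=-(\alpha^2-2\alpha x+1)$, $b(x)=(\alpha^2-2\alpha x+1)(\alpha-x)$, $c(x)=(1-x^2)-\tfrac14(\alpha^2-2\alpha x+1)^2$ together with the factorization of $\Delta(x)$, and checking that hypotheses (ii) and (iii) really do rule out the degenerate cases $S=0$ and $\tau=0$ in which the elimination above would collapse. There is no genuine conceptual obstacle here: the cleverness is already spent in the choice of $a,b,c$, which is engineered precisely so that solvability of the full system is governed by the single discriminant $\Delta(x_1)$.
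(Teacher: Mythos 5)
Your proof is correct and follows essentially the same route as the paper: eliminate $x_3,y_3$, fix $y_1$ with $y_1^2=1-x_1^2$, solve the resulting linear relation for $y_2$, and reduce to the quadratic $a(x_1)x_2^2+b(x_1)x_2+c(x_1)=0$ whose discriminant is $\Delta(x_1)$. Your explicit identities $a=-S$, $b=S\sigma$, $c=\tau^2-\tfrac14S^2$ and the factorization $\Delta(x)=(\alpha^2-2\alpha x+1)(1-x^2)(3-\alpha^2+2\alpha x)$ are a welcome refinement, since they also show that hypothesis (iii) forces $1-x_1^2\neq 0$ and hence $y_1\neq 0$, a point needed for the division by $y_1$ that the paper leaves implicit.
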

\begin{proof}
We use Proposition \ref{proposition.upper1.even}. Let
\be
x_3 = \alpha - (x_1+x_2) \;\; \mbox{and} \;\;
y_3=-(y_1+y_2).
\nn\ee
Then the system in Proposition \ref{proposition.upper1.even} holds if there exist $x_1,x_2,y_1,y_2 \in \F_q$ such that the equations
\begin{align}
	x_1^2 + y_1^2 &= 1,\label{ep1.proposition.upper2.even}\\
	x_2^2 + y_2^2 &= 1,\label{ep2.proposition.upper2.even}\\
	(\alpha-(x_1+x_2))^2 + (y_1+y_2)^2 &= 1\label{ep3.proposition.upper2.even}
\end{align}
hold. From Eq. (\ref{ep3.proposition.upper2.even}) we obtain
\begin{align}
0 & =  \alpha^2 + x_1^2 + x_2^2 + 2x_1x_2 -2\alpha x_1 -2 \alpha x_2  \nonumber\\
& ~~~~ +y_1^2 + y_2^2 + 2y_1y_2 -1 \label{ep4.proposition.upper2.even}\\
&= \alpha^2 + 2x_1x_2 - 2 \alpha x_1 -2\alpha x_2 + 2y_1y_2 + 1.\nonumber
\end{align}
where we use Eq. (\ref{ep1.proposition.upper2.even}) and Eq. (\ref{ep2.proposition.upper2.even}) in the last equation.

Using assumption i), let $y_1 \in \F_q^*$ such that
\be
y_1^2=1-x_1^2.
\nn\ee
Dividing Eq. (\ref{ep4.proposition.upper2.even}) by $2y_1$ we obtain
\be \label{ep5.proposition.upper2.even}
y_2=\frac{\alpha x_1 + \alpha x_2 -x_1x_2 - \frac{\alpha^2+1}{2}}{y_1}.
\ee
Taking the square of both sides of Eq. (\ref{ep5.proposition.upper2.even}) and using Eq. (\ref{ep1.proposition.upper2.even}), Eq. (\ref{ep2.proposition.upper2.even}), we obtain
\be
(1-x_1^2)(1-x_2^2) - \left( \alpha x_1 + \alpha x_2 -x_1x_2 - \frac{\alpha^2 + 1}{2} \right)^2 = 0.
\nn\ee
By direct computation we show that the last equation is equivalent to the equation
\be \label{ep6.proposition.upper2.even}
a(x_1)x_2^2 + b(x_1)x_2 + c(x_1)=0,
\ee
where $a(x)$, $b(x)$ and $c(x)$ are the polynomials in $\F_q[x]$ defined in the statement of the proposition.

Using assumptions ii) and iii), let $x_2 \in \F_q$ be a solution of Eq. (\ref{ep6.proposition.upper2.even}). Putting the values $x_1,y_1,x_2$ in Eq. (\ref{ep5.proposition.upper2.even}) we obtain $y_2 \in \F_q$. These values of $x_1,y_1,x_2,y_2 \in \F_q$ satisfy the system of equations (\ref{ep1.proposition.upper2.even}), (\ref{ep2.proposition.upper2.even}), (\ref{ep3.proposition.upper2.even}), which completes the proof.
\end{proof}

\begin{remark} \label{remark.proposition.upper2.even}
We note that the systems in Proposition \ref{proposition.upper1.even} and \cite[Proposition II.1]{SHO2023IT} are different. Nevertheless they are related and we obtain the same polynomials in Proposition \ref{proposition.upper2.even} and  \cite[Proposition II.2]{SHO2023IT}. Note further that the assumption (i) in Proposition \ref{proposition.upper2.even} is different from the corresponding assumption in \cite[Proposition II.2]{SHO2023IT}.
\end{remark}

\begin{proposition} \label{proposition.upper2.odd}
We keep the notation and assumptions of Theorem \ref{theorem.upper1}.
Let $D \in \F_q^*$ be a non-square in $\F_q$.
Let $\alpha \in \F_q^*$. Let $a(x),b(x),c(x) \in \F_q[x]$ be the polynomials given by
\begin{align*}
a(x) & =  2 \alpha x - \alpha^2 -D, \\
b(x) & =  2 \alpha x^2 + (-3 \alpha^2 -D) x + \alpha^3 + D\alpha, \;\; \mbox{and} \\
c(x) & = (-\alpha^2 -D) x^2 + (\alpha^3 + D \alpha) x - \frac{\alpha^4}{4} - \frac{D \alpha^2}{2} + \frac{3}{4}D^2.
\end{align*}
Put
$$\Delta(x)=b(x)^2 -4a(x)c(x) \in \F_q[x].$$
Assume that there exists $x_1 \in \F_q$ such that
\begin{itemize}
\item[i)] $D - x_1^2$ is a square in $\F_q$, 
\item[ii)] $a(x_1) \neq 0$, and 
\item[iii)] $\Delta(x_1)$ is a nonzero square in $\F_q$.
\end{itemize}
Then there exist $x_1,x_2,x_3,y_1,y_2,y_3 \in \F_q$ such that the system in Proposition \ref{proposition.upper1.odd} holds.
\end{proposition}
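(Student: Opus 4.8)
The plan is to transcribe the proof of Proposition \ref{proposition.upper2.even} almost verbatim, with the value $1$ replaced everywhere by the non-square $D$; indeed the polynomials $a(x),b(x),c(x)$ in the present statement are exactly those of Proposition \ref{proposition.upper2.even} under this substitution, so the algebra proceeds in parallel. First I would invoke Proposition \ref{proposition.upper1.odd}: it suffices to exhibit, for the given $\alpha \in \F_q^*$, elements $x_1,x_2,x_3,y_1,y_2,y_3 \in \F_q$ with $x_1+x_2+x_3=\alpha$, $y_1+y_2+y_3=0$ and $x_j^2+y_j^2=D$ for $j=1,2,3$. Setting $x_3:=\alpha-(x_1+x_2)$ and $y_3:=-(y_1+y_2)$ absorbs the two linear constraints and reduces the problem to finding $x_1,x_2,y_1,y_2\in\F_q$ with
\begin{align*}
x_1^2+y_1^2 &= D,\\
x_2^2+y_2^2 &= D,\\
\bigl(\alpha-(x_1+x_2)\bigr)^2+(y_1+y_2)^2 &= D.
\end{align*}
Expanding the third equation and using the first two to cancel $x_1^2,x_2^2,y_1^2,y_2^2$ gives the bilinear relation $\alpha^2+D-2\alpha x_1-2\alpha x_2+2x_1x_2+2y_1y_2=0$.

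Next, using assumption i), I would choose $y_1\in\F_q$ with $y_1^2=D-x_1^2$. Here $D-x_1^2\neq 0$ holds automatically, because $D$ is a non-square and hence $x_1^2=D$ is impossible; thus $y_1\in\F_q^*$, and dividing the bilinear relation by $2y_1$ yields $y_2=\bigl(\alpha x_1+\alpha x_2-x_1x_2-\tfrac{\alpha^2+D}{2}\bigr)/y_1$. Squaring this and eliminating $y_1^2,y_2^2$ via $y_1^2=D-x_1^2$ and $y_2^2=D-x_2^2$ leaves the equation $(D-x_1^2)(D-x_2^2)-\bigl(\alpha x_1+\alpha x_2-x_1x_2-\tfrac{\alpha^2+D}{2}\bigr)^2=0$, which, after a direct expansion, is exactly $a(x_1)x_2^2+b(x_1)x_2+c(x_1)=0$ with $a,b,c$ as in the statement. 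Assumption ii) makes this genuinely quadratic in $x_2$, and assumption iii) that its discriminant $\Delta(x_1)$ is a nonzero square in $\F_q$; hence it has a root $x_2\in\F_q$. Back-substituting into the formula for $y_2$ produces $y_2\in\F_q$, and then $(x_1,x_2,x_3,y_1,y_2,y_3)$ solves the system of Proposition \ref{proposition.upper1.odd}, as required.

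The only genuinely computational step is verifying that squaring the expression for $y_2$ and substituting the two quadratic relations reproduces precisely the coefficients $a(x),b(x),c(x)$; this is the same polynomial identity already checked in the proof of Proposition \ref{proposition.upper2.even}, with $1$ replaced by $D$, so it poses no real obstacle. It is worth noting that the exclusion $\alpha\notin\{-1,1\}$ needed in Proposition \ref{proposition.upper2.even} has no analogue here: the degeneracy that forced it occurs when $\alpha^2$ equals the right-hand constant of the circle equation, i.e.\ when $\alpha^2=D$, which is impossible since $D$ is a non-square. Hence the statement holds for every $\alpha\in\F_q^*$ without exception.
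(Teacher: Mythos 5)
Your proof is correct and follows exactly the route the paper takes: the paper's own proof simply says the argument is the same as in Proposition \ref{proposition.upper2.even} with $1$ replaced by $D$ (recording the modified formula for $y_2$), and your transcription of that argument, including the verification that the quadratic in $x_2$ has coefficients $a(x_1),b(x_1),c(x_1)$, is accurate. Your closing observation that no exclusion on $\alpha$ is needed because $\alpha^2=D$ is impossible for a non-square $D$ is also the reason the paper gives (in Proposition \ref{proposition.upper3.odd}) for the disappearance of the $\alpha\in\{-1,1\}$ caveat.
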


\begin{proof}
The proof is similar to the proof of Proposition \ref{proposition.upper2.even}. A main difference is that we obtain
\be \label{ep1.proposition.upper2.odd}
y_2=\frac{\alpha x_1 + \alpha x_2 -x_1x_2 - \frac{\alpha^2+D}{2}}{y_1}
\ee
instead of Eq. (\ref{ep5.proposition.upper2.even}). Taking the square of both sides of Eq. (\ref{ep1.proposition.upper2.odd}) and using $x_1^2+y_1^2=D$, $x_2^2+y_2^2=D$, we obtain
\be
(D-x_1^2)(D-x_2^2) - \left( \alpha x_1 + \alpha x_2 -x_1x_2 - \frac{\alpha^2 + D}{2} \right)^2 = 0.
\nn\ee
The rest of the proof is similar.
\end{proof}

We need two further propositions in order to use Weil's Sum method in the theorems below. Let $\overline{\F}_q$ be an algebraic closure of $\F_q$.

\begin{proposition} \label{proposition.upper3.even}
Let $\alpha \in \F_q^* \setminus \{-1,1\}$. Let $\Delta(x) \in \F_q[x]$ be the polynomial defined in Proposition \ref{proposition.upper2.even} above. Then there is no polynomial $f(x) \in \overline{\F}_q[x]$ such that
\be
\Delta(x)=(f(x))^2.
\nn\ee
\end{proposition}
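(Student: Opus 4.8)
The plan is to prove the statement by a direct comparison of coefficients. First I would observe that $\Delta(x)$ has degree exactly $4$: the polynomial $b(x)^2$ has degree $4$ with leading coefficient $(2\alpha)^2 = 4\alpha^2$, while $4a(x)c(x)$ has degree at most $3$, and $4\alpha^2 \neq 0$ since $\alpha \in \F_q^*$ and $\F_q$ has odd characteristic. Consequently, if $\Delta(x) = f(x)^2$ for some $f(x) \in \overline{\F}_q[x]$, then $\deg f = 2$; since $f^2 = (-f)^2$ and the leading coefficient of $f$ is one of the two square roots of $4\alpha^2$ in $\overline{\F}_q$, after replacing $f$ by $-f$ if necessary we may write $f(x) = 2\alpha x^2 + u x + v$ with $u, v \in \overline{\F}_q$.

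Next I would expand $\Delta(x) = b(x)^2 - 4a(x)c(x)$ explicitly. Using the identities $\alpha^3 + \alpha = \alpha(\alpha^2+1)$ and $\tfrac{\alpha^4}{4} + \tfrac{\alpha^2}{2} - \tfrac34 = \tfrac14(\alpha^2+3)(\alpha^2-1)$ to keep the bookkeeping manageable, a routine computation gives
\[
\Delta(x) = 4\alpha^2 x^4 - 4\alpha(\alpha^2-1)\,x^3 + (\alpha^4 - 6\alpha^2 - 3)\,x^2 + 4\alpha(\alpha^2-1)\,x + (3-\alpha^2)(\alpha^2+1).
\]
On the other hand, $f(x)^2 = 4\alpha^2 x^4 + 4\alpha u\,x^3 + (u^2 + 4\alpha v)\,x^2 + 2uv\,x + v^2$.

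Then I would match coefficients in the order $x^3$, $x^1$, $x^2$. Comparing $x^3$-coefficients gives $4\alpha u = -4\alpha(\alpha^2-1)$, hence $u = 1 - \alpha^2$; note $u \neq 0$ because $\alpha \notin \{-1,1\}$. Comparing $x^1$-coefficients gives $2uv = 4\alpha(\alpha^2-1) = -4\alpha u$, and dividing by $2u \neq 0$ yields $v = -2\alpha$. Finally the $x^2$-coefficients must agree, so $u^2 + 4\alpha v = \alpha^4 - 6\alpha^2 - 3$; but substituting the values already found, $u^2 + 4\alpha v = (1-\alpha^2)^2 - 8\alpha^2 = \alpha^4 - 10\alpha^2 + 1$, which forces $\alpha^4 - 10\alpha^2 + 1 = \alpha^4 - 6\alpha^2 - 3$, i.e. $4\alpha^2 = 4$, i.e. $\alpha^2 = 1$ --- contradicting $\alpha \in \F_q^* \setminus \{-1,1\}$. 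Hence no such $f$ exists.

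The only genuinely laborious step is the explicit expansion of $b^2 - 4ac$; once $\Delta$ is written out, the contradiction is immediate and needs no case analysis (one could cross-check the expansion by evaluating both sides at, say, $x=0$ and $x=1$). An alternative would be to show that the quartic $\Delta$ has nonzero discriminant in $\F_q$ and is therefore squarefree, but that computation is heavier (the discriminant of a quartic is a degree-$6$ form in its coefficients) and a priori more delicate over small fields, so the coefficient-matching argument is preferable.
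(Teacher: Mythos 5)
Your proof is correct and takes essentially the same route as the paper's: both assume $\Delta(x)$ is the square of a quadratic and compare coefficients of the degree-$4$ polynomial to force $\alpha^2=1$, contradicting $\alpha\in\F_q^*\setminus\{-1,1\}$. The only cosmetic difference is that the paper first normalizes to the monic polynomial $\Delta(x)/(4\alpha^2)$ before matching coefficients, whereas you keep the leading coefficient $2\alpha$ in $f$.
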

\begin{proof}
Let $\Delta_1(x)\in \F_q[x]$ be the polynomial defined as
\be
\Delta_1(x)=\frac{\Delta(x)}{4\alpha^2}.
\nn\ee
Let $A_3,A_2,A_1,A_0 \in \F_q$ be the coefficients defined as
\be
\Delta_1(x)=x^4 + A_3 X^3 + A_2 x^2 + A_1 x + A_0.
\nn\ee
Using the definition of $\Delta(x)$ in Proposition \ref{proposition.upper2.even}, by direct computation, we obtain that
\be \label{ep1.proposition.upper3.even}
A_3=\frac{-\alpha^2+1}{\alpha},
\ee
and hence $A_3 \neq 0$.

Assume the contrary that there exists $f(x) \in \overline{\F}_q[x]$ satisfying the equation in the statement of the proposition. This implies that there exist $c_1,c_0 \in \overline{\F}_q$ such that
\be \label{ep2.proposition.upper3.even}
x^4 + A_3 x^3 + A_2 x^2 + A_1 x + A_0 =(x^2+c_1 x +c_0)^2.
\ee
Using Eq. (\ref{ep2.proposition.upper3.even}) and the fact that $A_3 \neq 0$, we obtain
\be
c_1=\frac{A_0}{2}, \;\; c_0=\frac{A_1}{A_3},
\nn\ee
and
\be
A_2=c_1^2+2c_0.
\nn\ee
These imply that
\be \label{ep3.proposition.upper3.even}
A_2 - \left( \left( \frac{A_3}{2}\right)^2 + 2A_1\right)=0.
\ee
Moreover using the definition of $\Delta(x)$ in Proposition \ref{proposition.upper2.even}, by direct computation, we also obtain that
\be \label{ep4.proposition.upper3.even}
A_2 - \left( \left( \frac{A_3}{2}\right)^2 + 2A_1\right)=\frac{\alpha^2-1}{\alpha},
\ee
which is a contradiction to Eq. (\ref{ep3.proposition.upper3.even}). This completes the proof.
\end{proof}
\begin{remark} \label{remark.the.reason}
We note that we use the assumption $\alpha  \in \F_q^*\setminus \{-1,1\}$, in particular $\alpha \notin \{-1,1\}$, in the last part of the proof of Proposition \ref{proposition.upper3.even} and also in showing $A_3 \neq 0$.
\end{remark}

\begin{proposition} \label{proposition.upper3.odd}
Let $\alpha \in \F_q^*$. Let $\Delta(x) \in \F_q[x]$ be the polynomial defined in Proposition \ref{proposition.upper2.odd} above. Then there is no polynomial $f(x) \in \overline{\F}_q[x]$ such that
\be
\Delta(x)=(f(x))^2.
\nn\ee
\end{proposition}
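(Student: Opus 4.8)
The plan is to mirror the structure of the proof of Proposition \ref{proposition.upper3.even} as closely as possible, since the polynomial $\Delta(x)$ here is just the ``$D$-deformation'' of the even case: replacing $1$ by $D$ in the appropriate places. First I would normalize by setting $\Delta_1(x) = \Delta(x)/(4\alpha^2)$, which is monic of degree $4$ (note $\alpha \ne 0$, and the leading term of $\Delta(x)$ comes from $b(x)^2$'s top term $4\alpha^2 x^4$, exactly as before). Write $\Delta_1(x) = x^4 + A_3 x^3 + A_2 x^2 + A_1 x + A_0$. A direct computation — which I will carry out but not reproduce here — should give
\be
A_3 = \frac{-\alpha^2 + D}{\alpha},
\nn\ee
the evident analogue of Eq. (\ref{ep1.proposition.upper3.even}).

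The crucial point is whether $A_3 \ne 0$. Here the situation differs from the even case: $A_3 = 0$ would force $\alpha^2 = D$, but $D$ is a \emph{non-square} in $\F_q$ while $\alpha^2$ is a square, so this cannot happen, and we get $A_3 \ne 0$ \emph{for free for every} $\alpha \in \F_q^*$ — this is exactly why the hypothesis can be weakened from $\alpha \notin\{-1,1\}$ (as in Proposition \ref{proposition.upper3.even}) to merely $\alpha \ne 0$ in the odd case. I would remark on this explicitly, as it is the conceptual reason the exclusion of $\pm 1$ disappears.

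With $A_3 \ne 0$ in hand, the rest is the same $2$-adic-free algebra as in Proposition \ref{proposition.upper3.even}. Suppose for contradiction $\Delta_1(x) = (x^2 + c_1 x + c_0)^2$ for some $c_1, c_0 \in \overline{\F}_q$. Comparing coefficients of $x^3$ and $x^1$ (using $A_3 \ne 0$) determines $c_1$ and $c_0$ in terms of the $A_i$, and comparing the $x^2$ coefficient yields the identity
\be
A_2 - \left( \left(\tfrac{A_3}{2}\right)^2 + 2 A_1 \right) = 0.
\nn\ee
On the other hand, a second direct computation from the explicit formula for $\Delta(x)$ in Proposition \ref{proposition.upper2.odd} should give
\be
A_2 - \left( \left(\tfrac{A_3}{2}\right)^2 + 2 A_1 \right) = \frac{\alpha^2 - D}{\alpha},
\nn\ee
and since $\alpha^2 - D \ne 0$ (again because $D$ is a non-square) this quantity is nonzero — a contradiction, completing the proof.

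The main obstacle is purely computational: verifying the two displayed identities for $A_3$ and for $A_2 - ((A_3/2)^2 + 2A_1)$ from the explicit cubic-in-$\alpha$ coefficients of $a(x), b(x), c(x)$. These are the exact analogues of Eqs. (\ref{ep1.proposition.upper3.even}) and (\ref{ep4.proposition.upper3.even}), and one expects them to specialize to the even case at $D = 1$, which is a useful internal consistency check. No genuinely new idea beyond the even case is needed; the only structural difference — and it works in our favour — is that the non-squareness of $D$ makes $\alpha^2 - D$ automatically nonzero, removing the need to exclude $\alpha \in \{-1,1\}$.
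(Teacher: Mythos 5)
Your proposal is correct and follows the paper's proof essentially verbatim: normalize to $\Delta_1(x)=\Delta(x)/(4\alpha^2)$, compute $A_3=(D-\alpha^2)/\alpha$, observe that $A_3\neq 0$ automatically because $D$ is a non-square (which, as you rightly emphasize, is exactly why the exclusion $\alpha\notin\{-1,1\}$ from the even case disappears), and contradict the perfect-square coefficient identity. The only minor discrepancies are that the identity extracted from $(x^2+c_1x+c_0)^2$ should read $A_2-\bigl((A_3/2)^2+2A_1/A_3\bigr)=0$ (with $c_0=A_1/A_3$, a slip also present in the paper's display) and that the direct computation yields $(D\alpha^2-D^2)/\alpha^2$ rather than your $(\alpha^2-D)/\alpha$ — but both quantities are nonzero precisely because $\alpha^2\neq D$, so the argument is unaffected.
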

\begin{proof}
Recall that $D \in \F_q^*$ is a non-square in $\F_q$. The proof is similar to the proof of Proposition \ref{proposition.upper3.even} above. The first difference is that, using the same methods,  we obtain
\be \label{ep1.proposition.upper3.odd}
A_3=\frac{D-\alpha^2}{\alpha},
\ee
instead of Eq. (\ref{ep1.proposition.upper3.even}). This again implies that $A_3 \neq 0$.

As in the proof of Proposition \ref{proposition.upper3.even}, we obtain that Eq. (\ref{ep3.proposition.upper3.even}) holds again.  The second difference is that, using the same methods, we obtain
\be \label{ep2.proposition.upper3.odd}
A_2 - \left( \left( \frac{A_3}{2}\right)^2 + 2A_1\right)=\frac{-D^2+D\alpha^2}{\alpha^2}
\nn\ee
instead of Eq. (\ref{ep4.proposition.upper3.even}). Combining Eq. (\ref{ep3.proposition.upper3.even}) and Eq. (\ref{ep2.proposition.upper3.odd}) we obtain a contradiction, which completes the proof.
\end{proof}

We consider the case $\alpha \in \{1,-1\}$ of Proposition \ref{proposition.upper1.even} (see also Propositions \ref{proposition.upper2.even} and \ref{proposition.upper3.even}) separately in the next lemma.

\begin{lemma} \label{lemma.upper2.even.separate}
Let $\alpha \in \{-1,1\}$. Then there exist $x_1,x_2,x_3,y_1,y_2,y_3 \in \F_q$ such that the system in Proposition \ref{proposition.upper1.even} holds.
\end{lemma}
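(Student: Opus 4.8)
The plan is to handle the very small system in Proposition \ref{proposition.upper1.even} for the two special values $\alpha = \pm 1$ directly, by exhibiting explicit solutions. Since the system is symmetric under negating all of $x_1,x_2,x_3$ (which sends $\alpha$ to $-\alpha$), it suffices to treat $\alpha = 1$ and then obtain the $\alpha = -1$ case by changing signs. So first I would reduce to $\alpha = 1$.

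For $\alpha = 1$ the natural guess is the "rotation" solution: take two of the points on the conic $x^2 + y^2 = 1$ to be the standard points $(1,0)$, and split the remaining contribution. Concretely, one tries $x_1 = x_2 = 1$, $y_1 = y_2 = 0$, forcing $x_3 = 1 - (x_1+x_2) = -1$ and $y_3 = -(y_1+y_2) = 0$; then $x_3^2 + y_3^2 = 1$ holds automatically. This already gives a valid solution over any $\F_q$ of odd characteristic, with no case analysis on whether $q \equiv 7 \pmod 8$ or on squares — the point $(1,0)$ is always on the conic, and $1 + 1 + (-1) = 1$, $0 + 0 + 0 = 0$. So in fact the lemma needs essentially no arithmetic: for $\alpha = 1$ take $(x_1,x_2,x_3,y_1,y_2,y_3) = (1,1,-1,0,0,0)$, and for $\alpha = -1$ take $(-1,-1,1,0,0,0)$. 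I would present exactly this, verifying the five equations in each case.

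The one genuine check is characteristic $3$: one should confirm $(1,0)$ and $(-1,0)$ are still distinct-enough / that $1 \ne 0$ and $-1 \ne 0$, which is immediate since $q$ is odd; and one should confirm that $q$ being odd is all that is used, i.e. that nothing here requires $x_i \in \F_{q_0}^*$ (the constraint in Property Pi about $H_m$ was already absorbed into the reduction to this system via $H \subseteq H_m$ in Proposition \ref{proposition.upper1.even}, so the $x_i, y_i$ here range freely over $\F_q$). Hence there is no real obstacle: the "hard part," such as it is, is merely noticing that the degenerate points on the conic suffice and that one does not need to invoke Proposition \ref{proposition.upper2.even} or any Weil-sum estimate at all for $\alpha = \pm 1$ — this is precisely why the excerpt peels these two values off before setting up the curve machinery. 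I would therefore write the proof as two one-line verifications plus the sign-symmetry remark, and note explicitly that this is why $\alpha \in \{-1,1\}$ was excluded in Propositions \ref{proposition.upper2.even} and \ref{proposition.upper3.even} (where $A_3 \neq 0$ fails at $\alpha = \pm 1$).
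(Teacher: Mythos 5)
Your proposal is correct and matches the paper's proof: the paper likewise exhibits the explicit solutions $(1,1,-1,0,0,0)$ for $\alpha=1$ and $(1,-1,-1,0,0,0)$ for $\alpha=-1$ (your $(-1,-1,1,0,0,0)$ is an equivalent choice). The surrounding remarks about sign symmetry and why $\alpha=\pm1$ must be excluded from Proposition \ref{proposition.upper2.even} are accurate but not needed for the verification itself.
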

\begin{proof}
If $\alpha=1$, then $(x_1,x_2,x_3,y_1,y_2,y_3)=(1,1,-1,0,0,0)$ is a solution. If $\alpha=-1$, then $(x_1,x_2,x_3,y_1,y_2,y_3)=(1,-1,-1,0,0,0)$ is a solution.
\end{proof}

Now we are ready to use Weil's Sum method in the following theorems.

\begin{theorem} \label{theorem.upper2.even}
We keep the notation and assumptions of Theorem \ref{theorem.upper1}. Assume that $0 \le i \le 2^{\ell} -1$ is an even integer. Then Property Pi holds.
\end{theorem}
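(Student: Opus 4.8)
The plan is to combine Propositions \ref{proposition.upper1.even}, \ref{proposition.upper2.even}, and \ref{proposition.upper3.even} with Lemma \ref{lemma.upper2.even.separate} via a Weil-type estimate on the number of $\F_q$-points of an auxiliary algebraic curve. By Proposition \ref{proposition.upper1.even} it suffices, for each $\alpha \in \F_q^*$, to produce $x_1,\dots,y_3 \in \F_q$ solving the indicated system. By Lemma \ref{lemma.upper2.even.separate} the cases $\alpha \in \{-1,1\}$ are already handled, so fix $\alpha \in \F_q^* \setminus \{-1,1\}$. By Proposition \ref{proposition.upper2.even} it is enough to exhibit a single $x_1 \in \F_q$ satisfying the three conditions: (i) $1 - x_1^2$ is a square in $\F_q$, (ii) $a(x_1) \neq 0$, and (iii) $\Delta(x_1)$ is a nonzero square in $\F_q$.

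First I would introduce the affine curve $\mathcal{X}$ over $\F_q$ defined by the pair of equations $u^2 = 1 - x^2$ and $v^2 = \Delta(x)$, equivalently the smooth projective model of the function field $\F_q(x)[u,v]/(u^2 - (1-x^2), v^2 - \Delta(x))$. An $\F_q$-point of $\mathcal{X}$ with $x$-coordinate $x_1$ and with $u \neq 0$, $v \neq 0$ gives exactly the configuration required by Proposition \ref{proposition.upper2.even}, except that I still need $a(x_1) \neq 0$, which excludes at most the finitely many roots of $a(x)$ — here $a(x)$ is linear in $x$ (leading coefficient $2\alpha \neq 0$), so it rules out exactly one value of $x$. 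The key structural input is Proposition \ref{proposition.upper3.even}: $\Delta(x)$ is not a perfect square in $\overline{\F}_q[x]$, and it is a quartic with nonzero $x^3$-coefficient, so the curve $v^2 = \Delta(x)$ together with $u^2 = 1-x^2$ is absolutely irreducible of genus bounded by a small absolute constant (one can read off the genus from the quartic $\Delta$ and the quadratic $1-x^2$; both $\pm 1$ should be checked not to be common roots, or if they are the genus estimate only improves). Then the Hasse–Weil bound gives $|\#\mathcal{X}(\F_q) - (q+1)| \le 2g\sqrt{q} + O(1)$ for the number of $\F_q$-rational points.

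Next I would subtract the bounded number of ``bad'' points: the points at infinity, the points with $u = 0$ (i.e.\ $x = \pm 1$), the points with $v = 0$ (i.e.\ $\Delta(x) = 0$, at most $4$ values of $x$, each contributing at most $2$ points), and the points with $a(x_1) = 0$ (at most $2$ points). All of these contribute at most an absolute constant number of points, independent of $q$. Hence as long as $q + 1 - 2g\sqrt{q} - C > 0$ for the relevant absolute constants $g$ and $C$, there exists an $\F_q$-point of $\mathcal{X}$ avoiding all bad loci, which yields the desired $x_1$ and completes the proof for $\alpha \notin \{-1,1\}$. Since we are in the regime $q = q_0^s \equiv 2^\ell - 1 \pmod{2^{\ell+1}}$ with $q \ge 7$, one should check the inequality directly for the few small values of $q$ not covered by the asymptotic bound (or invoke the explicit small-$s$ analysis promised in the introduction); in any case the estimate is uniform in $i$ since the construction in Proposition \ref{proposition.upper2.even} did not depend on the even index $i$.

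The main obstacle I anticipate is the absolute irreducibility of $\mathcal{X}$ and a clean genus bound: one must rule out that adjoining $\sqrt{1-x^2}$ to $\F_q(x)$ already makes $\Delta(x)$ a square (which would make the second equation define a split or reducible cover), and more generally that $\mathcal{X}$ does not decompose. Proposition \ref{proposition.upper3.even} says $\Delta$ is not a square in $\overline{\F}_q[x]$, but for the fiber product one needs the stronger statement that $\Delta(x)$ is not a square in the quadratic extension $\overline{\F}_q(x)(\sqrt{1-x^2})$ — i.e.\ that $\Delta(x)$ and $\Delta(x)(1-x^2)$ are both non-squares, equivalently that $\Delta(x)$ is not of the form $(1-x^2) \cdot (\text{square})$ either. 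This is again a short explicit polynomial computation (compare degrees and leading/subleading coefficients, or factor $\Delta$), and it is exactly the kind of check the ``direct computation'' in Proposition \ref{proposition.upper3.even} is set up to support; once that is in hand, the Weil bound and the bookkeeping of bad points are routine.
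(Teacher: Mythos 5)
Your proposal is correct and takes essentially the same approach as the paper: the paper phrases the same count as the quadratic-character sum $\sum_{x\in\F_q}(1+\eta(1-x^2))(1+\eta(\Delta(x)))$, applies Weil's character-sum bound using Proposition \ref{proposition.upper3.even}, obtains $N_1\ge q-8q^{1/2}-15>0$ for $q>94$, and checks the remaining cases $q\in\{7,23,31,47,71,79,87\}$ by computer search. Your Hasse--Weil count on the fiber product $u^2=1-x^2$, $v^2=\Delta(x)$ is the geometric twin of that estimate, and the non-squareness of $(1-x^2)\Delta(x)$ that you rightly flag is precisely the hypothesis needed to apply the Weil bound to the paper's cross term $\sum_{x}\eta((1-x^2)\Delta(x))$.
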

\begin{proof}
Let $T_1, T_2, T_3$ be the subsets of $\F_q$ defined as
\be
\begin{array}{l}
T_1=\{1,-1\}, \\
T_2=\{x \in \F_q: a(x)=0\},  \\
T_3=\{x \in \F_q: \Delta(x)=0\}.
\end{array}
\nn\ee
Put $T=T_1 \cup T_2 \cup T_3$. Note that $|T_1|=2$, $|T_2|=1$, and $|T_3| \le 4$.

Let $\eta$ be the quadratic character on $\F_q$ given by
\be
\begin{array}{rcl}
\eta: \F_q & \ra & \{0,1,-1\} \\
x & \mapsto & \left\{
\begin{array}{rl}
0, & \mbox{if $x=0$}, \\
1, & \mbox{if $x \in \F_q^*$ is a square}, \\
-1, & \mbox{if $x \in \F_q^*$ is a non-square}.
\end{array}
\right.
\end{array}
\nn\ee
For $1 \le i \le 3$, let
\be
E_i=\sum_{x \in T_i} (1 + \eta(1-x^2)) (1 + \eta(\Delta(x))).
\nn\ee
Put
\be
E= \sum_{x \in T} (1 + \eta(1-x^2)) (1 + \eta(\Delta(x))),
\nn\ee
\be
N_1= \sum_{x \in \F_q \setminus T} (1 + \eta(1-x^2)) (1 + \eta(\Delta(x))),
\nn\ee
and
\be
N= \sum_{x \in \F_q } (1 + \eta(1-x^2)) (1 + \eta(\Delta(x))).
\nn\ee
It is clear that we have
\be
N_1=N-E
\nn\ee
and
\be
\begin{array}{l}
E_1 = \sum_{x \in T_1} (1 + \eta(\Delta(x))) \le 4, \\ \\
E_2 \le 2 \cdot 2 =4, \\ \\
E_3 \le 4 \cdot 2 = 8.
\end{array}
\nn\ee
Therefore we have
\be \label{ep1.theorem.upper2.even}
E \le 4 + 4 + 8 = 16.
\ee
For $x \in \F_q$ we have
\be \label{ep2.theorem.upper2.even}
\begin{array}{l}
(1 + \eta(1-x^2))(1+ \eta(\Delta(x))) \\ \\
= 1 + \eta(1-x^2) + \eta(\Delta(x)) + \eta((1-x^2)\Delta(x)).
\end{array}
\ee
It is well known that (see, for example, \cite[Theoprem 5.48]{LN}) that
\be \label{ep3.theorem.upper2.even}
\sum_{x \in \F_q} \eta(1-x^2)= -\eta(-1)=1.
\ee
Using Proposition \ref{proposition.upper3.even} and Weil's Sum (see, for example, \cite[Theoprem 5.41]{LN}) we have
\be
\left| \sum_{x \in \F_q} \eta(\Delta(x)) \right| \le 3 q^{1/2},
\ee
and
\be
\left| \sum_{x \in \F_q} \eta((1-x^2)\Delta(x)) \right| \le 5 q^{1/2}.
\ee
Using (\ref{ep2.theorem.upper2.even}) and (\ref{ep3.theorem.upper2.even}) we obtain
\be
\begin{array}{rcl}
N &= &\sum_{\x \in \F_q} 1 + \sum_{x \in \F_q} \eta(1-x^2) + \sum_{x \in \F_q} \eta(\Delta(x)) + \sum_{\F_q} \eta((1-x^2)\Delta(x)) \\ \\
& = & q+1  + \sum_{x \in \F_q} \eta(\Delta(x)) + \sum_{\F_q} \eta((1-x^2)\Delta(x)).
\end{array}
\nn\ee
These arguments imply that
\be
N \ge q+1 - 8q^{1/2}
\nn\ee
and hence
\be \label{ep4.theorem.upper2.even}
N_1 \ge q- 8q^{1/2}-15.
\ee
Note that $q-8q^{1/2}-15 > 0$ if $q > 94$. Using Proposition \ref{proposition.upper2.even} we complete the proof if $q > 94$.

The set of cardinalities $q$ such that there exists a finite $\F_q$ of odd characteristic such that $q \equiv 7 \mod 8$ and $q < 94$ is
\be
S=\{7,23,31,47,71,79,87\}.
\nn\ee
For each $q \in S$, using Magma \cite{magma} and a direct search, we show that the theorem holds. This completes the proof.
\end{proof}

\begin{theorem} \label{theorem.upper2.odd}
We keep the notation and assumptions of Theorem \ref{theorem.upper1}. Assume that $0 \le i \le 2^{\ell} -1$ is an odd integer. Then Property Pi holds.
\end{theorem}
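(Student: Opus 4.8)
The plan is to adapt the proof of Theorem~\ref{theorem.upper2.even}, replacing its three even-index ingredients by Propositions~\ref{proposition.upper1.odd}, \ref{proposition.upper2.odd} and~\ref{proposition.upper3.odd}. Fix a non-square $D \in \F_q^*$ and an arbitrary $\alpha \in \F_q^*$; by Propositions~\ref{proposition.upper1.odd} and~\ref{proposition.upper2.odd} it is enough to produce, for every such $\alpha$, one element $x_1 \in \F_q$ for which $D - x_1^2$ is a square, $a(x_1) \neq 0$, and $\Delta(x_1)$ is a nonzero square, where $a(x)$ (degree $1$) and $\Delta(x) = b(x)^2 - 4a(x)c(x)$ (degree $4$, leading coefficient $4\alpha^2$) are the polynomials of Proposition~\ref{proposition.upper2.odd}; Property~Pi for the odd index $i$ then follows. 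Two features make the odd case a little cleaner than the even one: since $D$ is a non-square, $D - x^2$ has no root in $\F_q$, so ``$D - x_1^2$ is a square'' already forces it to be a nonzero square and there is no analogue of the set $T_1 = \{1,-1\}$; and $\alpha^2 \neq D$ for every $\alpha \in \F_q^*$, so the coefficient $A_3 = (D - \alpha^2)/\alpha$ of Proposition~\ref{proposition.upper3.odd} never vanishes and no exceptional value of $\alpha$ needs separate treatment (no analogue of Lemma~\ref{lemma.upper2.even.separate} is required).

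For the counting step, let $\eta$ be the quadratic character of $\F_q$ and set
\[
N = \sum_{x \in \F_q}\bigl(1 + \eta(D - x^2)\bigr)\bigl(1 + \eta(\Delta(x))\bigr);
\]
expanding the product gives $N = q + \sum_{x}\eta(D - x^2) + \sum_{x}\eta(\Delta(x)) + \sum_{x}\eta\bigl((D - x^2)\Delta(x)\bigr)$, all sums over $\F_q$. By \cite[Theorem~5.48]{LN} together with the congruence $q \equiv 7 \pmod{8}$ inherited from Eq.~(\ref{Assumption3}), one has $\sum_{x}\eta(D - x^2) = -\eta(-1) = 1$. To apply Weil's bound \cite[Theorem~5.41]{LN} to the other two sums, one must know that neither $\Delta(x)$ nor $(D - x^2)\Delta(x)$ is a constant multiple of the square of a polynomial in $\overline{\F}_q[x]$. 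This follows, in the spirit of Proposition~\ref{proposition.upper3.odd}, from the direct-computation identities that $b$ and $\Delta$ vanish at the root $x_0$ of $a$ while $c$ does not, together with $\Delta(\pm\sqrt{D}) = 0$: these give a factorization $\Delta(x) = (D - x^2)\,a(x)\,M(x)$ with $M(x) \in \F_q[x]$ linear and $\gcd(a, M) = \gcd(a, D - x^2) = 1$, so $\Delta$ is squarefree and in $(D - x^2)\Delta(x)$ the roots of $a$ and $M$ occur with odd multiplicity. Weil's bound then yields $\bigl|\sum_{x}\eta(\Delta(x))\bigr| \le 3q^{1/2}$ and $\bigl|\sum_{x}\eta((D - x^2)\Delta(x))\bigr| \le 5q^{1/2}$, whence $N \ge q + 1 - 8q^{1/2}$.

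As in Theorem~\ref{theorem.upper2.even}, one subtracts the contribution $E$ of the exceptional set $T = T_2 \cup T_3$, with $T_2 = \{x \in \F_q : a(x) = 0\}$ and $T_3 = \{x \in \F_q : \Delta(x) = 0\}$. Since $|T_2| \le 1$ and $|T_3| \le 4$, each summand of $N$ is at most $4$, and each summand over $T_3$ is at most $2$ (there $\eta(\Delta(x)) = 0$), so $E \le 12$ and hence $N_1 := N - E \ge q + 1 - 8q^{1/2} - 12$. This is positive for all $q$ past an explicit threshold; for such $q$, $N_1 > 0$ forces some $x_1 \in \F_q \setminus T$ with $D - x_1^2$ and $\Delta(x_1)$ both nonzero squares, and Proposition~\ref{proposition.upper2.odd} finishes the proof. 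The remaining finitely many prime powers $q \equiv 7 \pmod{8}$ below the threshold are settled by a direct search in Magma~\cite{magma}, just as in the even case.

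I expect the main obstacle to be the verification, uniform in $\alpha \in \F_q^*$, that $\Delta(x)$ and $(D - x^2)\Delta(x)$ are not constant multiples of squares in $\overline{\F}_q[x]$ --- equivalently, establishing the factorization $\Delta = (D - x^2)\,a\,M$ and the non-vanishing of $c$ at the root of $a$ --- since this is exactly what licenses Weil's bound with the stated exponents. Everything else --- expanding the character sum, bounding $E$, fixing the explicit threshold for $q$, and the finite Magma check --- is routine and runs parallel to the even case.
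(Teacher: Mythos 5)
Your proposal is correct and follows essentially the same route as the paper: reduce to Propositions~\ref{proposition.upper1.odd} and~\ref{proposition.upper2.odd}, count admissible $x_1$ via the character sum $\sum_{x}(1+\eta(D-x^2))(1+\eta(\Delta(x)))$ with $\sum_x\eta(D-x^2)=1$, invoke Weil's bound through Proposition~\ref{proposition.upper3.odd}, subtract $E\le 12$ over $T=T_2\cup T_3$ to get $N_1\ge q-8q^{1/2}-11$, and finish small $q$ by a Magma search. Your extra verification that $\Delta(x)=(D-x^2)\,a(x)\,M(x)$ with $M$ linear over $\F_q$ and the resulting squarefreeness (so that Weil applies to $(D-x^2)\Delta(x)$ as well) is a correct refinement of a point the paper passes over by citing only Proposition~\ref{proposition.upper3.odd}, but it does not change the argument's structure.
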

\begin{proof}
The proof is similar to the proof of Theorem \ref{theorem.upper2.even}. Note that the polynomial $D-x^2$ has no root in $\F_q$ so that there is no need to define the set $T_1$. Let $T_2$ and $T_3$ be the sets defined similarly. Let $T=T_2 \cup T_3$. We define the sums $E$, $N_1$ and $N$ analogously. Note that we have
\be
E \le 4 + 8 =12
\nn\ee
instead of Eq. (\ref{ep1.theorem.upper2.even}). Moreover we have
\be
\sum_{x \in \F_q} \eta(D-x^2)= -\eta(-1)=1.
\nn\ee
as in Eq. (\ref{ep3.theorem.upper2.even}). Using the same arguments as in the proof of Theorem \ref{theorem.upper2.even} we obtain that
\be
N_1 \ge q- 8q^{1/2}-11.
\ee
instead of Eq. (\ref{ep4.theorem.upper2.even}). We complete the proof similarly.
\end{proof}

Combining Theorems \ref{theorem.upper1}, \ref{theorem.upper2.even} and \ref{theorem.upper2.odd}, we immediately obtain the main result of this section in the following.

\begin{corollary} \label{corollary.upper}
Let $\ell \ge 3$ be an integer. Let $\F_{q_0}$ be a finite field such that Eq. (\ref{Assumption1}) holds. Let $s \ge 1$ be an odd integer and put $q=q_0^s$. Then the covering radius of $\mathcal{C}_s(q_0)$ is at most $3$.
\end{corollary}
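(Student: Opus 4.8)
The plan is to assemble the pieces already established in the excerpt. By Theorem~\ref{theorem.upper1}, it suffices to verify Property Pi for every index $0 \le i \le 2^\ell - 1$, and the parity of $i$ splits into two cases handled separately. For even $i$, Theorem~\ref{theorem.upper2.even} already asserts Property Pi holds; for odd $i$, Theorem~\ref{theorem.upper2.odd} does the same. So the proof of the corollary is essentially a one-line invocation: combine Theorem~\ref{theorem.upper1} with Theorems~\ref{theorem.upper2.even} and~\ref{theorem.upper2.odd} to conclude that Property Pi holds for all indices, hence the covering radius of $\mathcal{C}_s(q_0)$ is at most $3$.

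Since the corollary is immediate from the cited theorems, there is no real obstacle at the level of the corollary itself; the genuine work has already been done in the preceding results. Still, to make the write-up self-contained I would briefly recall the logical chain underlying those theorems, so a reader sees why the hypothesis $\ell \ge 3$ and Eq.~(\ref{Assumption1}) suffice. First, Theorem~\ref{theorem.upper1} reduces bounding the covering radius by $3$ to showing that every $\mu \in \F_{q^2}^*$ can be written as $c_1 h_1 + c_2 h_2 + c_3 h_3$ with $c_j \in \F_{q_0}^*$ and $h_j \in H$; using the group-theoretic decomposition $\F_{q^2}^* \supseteq G \cdot H$ with $\gcd(|G|,|H|) = 2^\ell$, this is equivalent to writing each $\gamma \in G_i$ as a sum of three elements of $H_m = \F_{q_0}^* \cdot H$, which is precisely Property Pi.

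Next, Propositions~\ref{proposition.upper1.even} and~\ref{proposition.upper1.odd} translate Property Pi (for even resp.\ odd $i$) into the solvability of an explicit system of quadratic equations over $\F_q$ — a ``sum of three points on a conic equals $\alpha$'' condition with the conic $x^2 + y^2 = 1$ (even case) or $x^2 + y^2 = D$ for a nonsquare $D$ (odd case). Propositions~\ref{proposition.upper2.even}, \ref{proposition.upper2.odd}, \ref{proposition.upper3.even}, \ref{proposition.upper3.odd} and Lemma~\ref{lemma.upper2.even.separate} then reduce this, for $\alpha \notin \{-1,1\}$, to finding $x_1 \in \F_q$ with $1 - x_1^2$ (resp.\ $D - x_1^2$) a square, $a(x_1) \neq 0$, and $\Delta(x_1)$ a nonzero square, where $\Delta(x)$ is a quartic that is \emph{not} a perfect square in $\overline{\F}_q[x]$. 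The part requiring the most care — and the reason the earlier theorems occupy ``a long and quite technical section'' — is the Weil-sum estimate: one forms $N = \sum_{x \in \F_q}(1 + \eta(1-x^2))(1 + \eta(\Delta(x)))$, expands it, uses $\sum_x \eta(1-x^2) = -\eta(-1) = 1$ together with Weil's bounds $|\sum_x \eta(\Delta(x))| \le 3q^{1/2}$ and $|\sum_x \eta((1-x^2)\Delta(x))| \le 5q^{1/2}$, subtracts the $O(1)$ contribution of the exceptional set $T$, and obtains a lower bound that is positive once $q$ exceeds a small explicit threshold; the finitely many remaining $q$ (namely $q \in \{7,23,31,47,71,79,87\}$) are cleared by direct computation. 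Granting all of this, the corollary follows by simply quoting Theorems~\ref{theorem.upper1}, \ref{theorem.upper2.even}, and~\ref{theorem.upper2.odd}.
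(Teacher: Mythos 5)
Your proposal is correct and matches the paper exactly: the paper obtains this corollary by directly combining Theorem~\ref{theorem.upper1} with Theorems~\ref{theorem.upper2.even} and~\ref{theorem.upper2.odd}, precisely as you do. The additional recap of the underlying reductions and the Weil-sum estimate is accurate but not needed for the corollary itself.
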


\section{Determining the covering radius}\label{section4}

In this section we determine the covering radius of all $\mathcal{C}_s(q_0)$ of odd characteristic which were left as an open problem in \cite{SHO2023IT}. We solve this open problem completely in this section.

Recall that $\ell \ge 3$ is an integer. Using Eq. (\ref{Assumption1}) and Eq. (\ref{Assumption2}), as in section,  we have
\be
q\equiv 2^\ell -1\!\!\!\!\! \pmod{2^{\ell +1}}.
\nn\ee
Again, let $\theta$ be a primitive $2^\ell$-th root of $1$ in $\F_{q^2}^*$. We keep the notation of $m$ and the subgroups $H$ and $H_m$ as in Section \ref{section.upper bounds}.

For index $0 \le i \le 2^\ell -1$, let Property NPi be the property defined as follows:
\begin{definition}
Property NPi: There exists $\gamma \in \F_{q^2}^*$ such that $\gamma^q=\theta^i \gamma$ and the equation
\be
h_1+h_2=\gamma
\nn\ee
is not solvable with $h_1,h_2 \in H_m$.
\end{definition}

Using the methods of \cite{SHO2023IT} and the results of Section \ref{section.upper bounds} above in this paper, we immediately obtain the following theorem. We skip its proof since it uses the same arguments in \cite{SHO2023IT}.

\begin{theorem} \label{theorem1.determineCR}
Let $\ell \ge 3 $ be an integer. Let $\F_{q_0}$ be a finite field such that Eq. (\ref{Assumption1}) holds. Let $s \ge 1$ be an odd integer and put $q=q_0^s$. Then the covering radius of $\mathcal{C}_s(q_0)$ is $3$ if and only if there exists an index $0 \le i \le 2^\ell -1$ such that Property NPi holds. Otherwise, the covering radius is $2$.
\end{theorem}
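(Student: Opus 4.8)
The plan is to reduce the covering radius question to the solvability of two-term sums, exactly as in \cite{SHO2023IT}, and then invoke Corollary \ref{corollary.upper} to pin down the only two possible values. First I would recall that, since $\mathcal{C}_s(q_0)$ has length $n = q+1$ and dimension $n-2s$, and its minimum distance is at least $3$ (so that $t(\mathcal{C}_s(q_0)) \ge 1$), the covering radius is at least $2$: there is some syndrome (a column vector of $2s$ entries) not expressible as a scalar multiple of a single column of the parity check matrix, because the columns together with their $\F_{q_0}^*$-multiples form the set $H_m = \F_{q_0}^* \cdot H$, which has size $m(q+1) < q^2-1$ for $q_0 > 3$ (and the small cases are handled by the same syndrome-counting or by Magma). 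By Corollary \ref{corollary.upper} the covering radius is at most $3$, so it is exactly $2$ or exactly $3$, and it suffices to characterize when it equals $3$.

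Next I would translate ``covering radius $\le 2$'' into a statement about $H_m$. Mirroring the proof of Theorem \ref{theorem.upper1}: the covering radius is at most $2$ if and only if for every $\mu \in \F_{q^2}^*$ there exist $c_1, c_2 \in \F_{q_0}^*$ and $h_1, h_2 \in H$ with $c_1 h_1 + c_2 h_2 = \mu$, equivalently (using $\F_{q_0}^* \cdot H = H_m$) there exist $h_1, h_2 \in H_m$ with $h_1 + h_2 = \mu$. Using the decomposition $\F_{q^2}^* = \bigcup_{i=0}^{2^\ell-1} G_i \cdot H$ established in the proof of Theorem \ref{theorem.upper1} (via $\gcd(|G|,|H|) = 2^\ell$, $\mathrm{lcm}(|G|,|H|) = q^2-1$), and the fact that multiplying $\mu$ by an element of $H$ preserves solvability in $H_m$, the condition ``$h_1 + h_2 = \mu$ solvable in $H_m$ for all $\mu$'' is equivalent to ``$h_1 + h_2 = \gamma$ solvable in $H_m$ for all $\gamma \in G_i$, for every $i$''. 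Hence the covering radius is $\le 2$ if and only if Property NP$i$ fails for every $i$, i.e.\ it equals $3$ if and only if Property NP$i$ holds for some $i$.

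Combining the two halves: the covering radius is at least $2$ always; by Corollary \ref{corollary.upper} it is at most $3$; and by the equivalence just described it is exactly $3$ precisely when some Property NP$i$ holds, and otherwise exactly $2$. I expect the main obstacle to be purely expository rather than mathematical — namely, carefully re-deriving the $G_i$-coset reduction and the ``$\le \rho$ columns'' reformulation in the present $q \equiv 7 \pmod 8$ setting, checking that nothing in the arguments of \cite{SHO2023IT} used $q \not\equiv 7 \pmod 8$ in an essential way, since the only genuinely new ingredient (the upper bound $\rho \le 3$) is already supplied by Corollary \ref{corollary.upper}. The small-field cases $q \in \{7,23,31,47,71,79,87\}$ would again be verified directly with Magma, as in the proof of Theorem \ref{theorem.upper2.even}.
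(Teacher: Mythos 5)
Your proposal is correct and matches the paper's intended route: the paper skips this proof entirely, deferring to the arguments of \cite{SHO2023IT} combined with the new upper bound of Corollary \ref{corollary.upper}, which is precisely your combination of the coset decomposition $\F_{q^2}^*=\bigcup_i G_i\cdot H$, the reduction modulo $H$ via $\F_{q_0}^*\cdot H=H_m$, and the syndrome count $1+m(q+1)<q^2$ giving covering radius at least $2$. One small step worth making explicit: for the ``only if'' direction you need a witness $\gamma$ of Property NP$i$ to also fail to be covered at radius $1$, i.e.\ $\gamma\notin H_m$; this is automatic, since any $\gamma\in H_m$ can be written as $\gamma=c\gamma+(1-c)\gamma$ with $c\in\F_{q_0}\setminus\{0,1\}$ and $c\gamma,(1-c)\gamma\in H_m$ (recall $\F_{q_0}^*\subseteq H_m$ and $q_0\ge 7$), so a witness of NP$i$ can never lie in $H_m$ and your identification of ``covering radius $\le 2$'' with ``every $\mu\in\F_{q^2}^*$ is a sum of exactly two elements of $H_m$'' is legitimate.
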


Next we present two technical propositions. Let $a$ be a generator of $\F_{q_0}^*$. Recall that $m=(q_0-1)/2$.

\begin{proposition} \label{proposition1.determineCR}
We keep the notation and assumptions of Theorem \ref{theorem1.determineCR}. Assume that $0 \le i \le 2^\ell -1$ is an even integer. Then Property NPi is equivalent to the following: There exists $\alpha \in \F_q^*$ such that the system
\begin{align*}
x_1 + x_2 & =  \alpha, \\
y_1 + y_2 & =  0, \\
x_1^2 + y_1^2 & =  a^{2j_1}, \\
x_2^2 + y_2^2 & =  a^{2j_2}
\end{align*}
is not solvable with $x_1,x_2,y_1,y_2 \in \F_q$ for any integers $0 \le j_1,j_2 \le m-1$.
\end{proposition}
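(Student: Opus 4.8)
The plan is to mirror the proof of Proposition~\ref{proposition.upper1.even}, the only new input being an \emph{exact} membership criterion for $H_m$ rather than the one-sided containment $H\subseteq H_m$ that was used there.

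First I would fix $\omega_1,\omega_2\in\F_{q^2}^*$ with $\omega_1^{q-1}=\theta^i$ and $\omega_2^{q-1}=-\theta^i$, so that $\{\omega_1,\omega_2\}$ is an $\F_q$-basis of $\F_{q^2}$; every $\gamma\in\F_{q^2}^*$ with $\gamma^q=\theta^i\gamma$ is $\gamma=\alpha\omega_1$ for a unique $\alpha\in\F_q^*$, and conversely every $\alpha\in\F_q^*$ arises this way. Writing $h_1=x_1\omega_1+y_1\omega_2$ and $h_2=x_2\omega_1+y_2\omega_2$ with $x_j,y_j\in\F_q$, the equation $h_1+h_2=\gamma$ becomes $x_1+x_2=\alpha$ and $y_1+y_2=0$, while the computation already carried out in Proposition~\ref{proposition.upper1.even} gives $h_j^{q+1}=D_1x_j^2+D_2y_j^2$, where $D_1=\omega_1^{q+1}$ and $D_2=\omega_2^{q+1}$ are nonzero squares of $\F_q^*$ because $i$ is even.

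The key step is to establish that $h\in H_m$ if and only if $h^{q+1}\in\{a^{2j}:0\le j\le m-1\}$. This holds because $h\in H_m$ is equivalent to $h^{m(q+1)}=1$, i.e. $(h^{q+1})^m=1$; since $\F_q^*$ is cyclic with $m\mid q-1$, the set of $z\in\F_q^*$ with $z^m=1$ is the unique subgroup of $\F_q^*$ of order $m$, and that subgroup is exactly the group $(\F_{q_0}^*)^2$ of squares of $\F_{q_0}^*$, which is $\{a^{2j}:0\le j\le m-1\}$ as $a$ generates $\F_{q_0}^*$. Combining this with the previous paragraph, $h_1+h_2=\gamma$ is solvable with $h_1,h_2\in H_m$ if and only if there exist integers $0\le j_1,j_2\le m-1$ and $x_1,x_2,y_1,y_2\in\F_q$ with $x_1+x_2=\alpha$, $y_1+y_2=0$, $D_1x_1^2+D_2y_1^2=a^{2j_1}$ and $D_1x_2^2+D_2y_2^2=a^{2j_2}$. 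Writing $D_1=E_1^2$, $D_2=E_2^2$ with $E_1,E_2\in\F_q^*$ and applying the change of variables $E_1x_j\mapsto x_j$, $E_2y_j\mapsto y_j$, $E_1\alpha\mapsto\alpha$ (which permutes $\F_q^*$) turns this system into the one in the statement. Since Property NPi asserts precisely the existence of such a $\gamma$, equivalently of such an $\alpha\in\F_q^*$, for which $h_1+h_2=\gamma$ is not solvable in $H_m$, this yields the claimed equivalence.

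I expect no serious obstacle here: the only point that is not a transcription of Proposition~\ref{proposition.upper1.even} is the norm-theoretic identification of $H_m$ with $\{a^{2j}:0\le j\le m-1\}$, and once that is in hand everything else is routine bookkeeping with the basis and the change of variables. The analogous odd-$i$ statement should then follow by the same route, replacing $D_1,D_2$ by non-squares and inserting a fixed non-square $D$ exactly as in Proposition~\ref{proposition.upper1.odd}.
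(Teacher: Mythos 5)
Your proposal is correct and follows essentially the same route as the paper: the same basis $\{\omega_1,\omega_2\}$, the same norm computation $h_j^{q+1}=D_1x_j^2+D_2y_j^2$, and the same rescaling by square roots of $D_1,D_2$. The only difference is cosmetic: the paper cites \cite{SHO2023IT} for the criterion $h\in H_m \iff h^{q+1}\in\{a^{2j}\}$, whereas you supply the short cyclic-group argument for it directly, which is a welcome self-contained touch.
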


\begin{proof}
The proof is similar to the proof of Proposition \ref{proposition.upper1.even} above. We use the same methods, in particular the same basis $\{\omega_1,\omega_2\}$ of $\F_{q^2}$ over $\F_q$. Let $h_1,h_2 \in \F_{q^2}$ and use their expansions
\begin{align*}
h_1 & =  x_1 \omega_1 + y_1 \omega_2, \\ 
h_2 & =  x_2 \omega_1 + y_2 \omega_2,
\end{align*}
with $x_1,y_1,x_2,y_2 \in \F_q$. As in the proof of Proposition \ref{proposition.upper1.even}, let $D_1=\omega_1^{q+1}$ and $D_2=\omega_2^{q+1}$.

Note that $h_1 \in H_m$ if and only if there exists $0 \le j_1 \le m-1$ such that (see \cite{SHO2023IT})
\be
a^{2j_1} = h_1^{q+1}.
\nn\ee
As in the proof of Proposition \ref{proposition.upper1.even} we have
\be
h_1^{q+1}=x_1^2 D_1 + y_1^2 D_2.
\nn\ee

These arguments imply that Property NPi is equivalent to the following: There exists $\alpha \in \F_q^*$ such that the system
\be \label{ep3.proposition1.determineCR}
\begin{array}{rcl}
x_1 + x_2 & = & \alpha, \vspace{1mm}\\ 
y_1 + y_2 & = & 0, \vspace{1mm}\\ 
D_1 x_1^2 + D_2 y_1^2 & = & a^{2j_1},\vspace{1mm} \\ 
D_1 x_2^2 + D_2 y_2^2 & = & a^{2j_2}
\end{array}
\ee
is not solvable with $x_1,x_2,y_1,y_2 \in \F_q$ for any integers $0 \le j_1,j_2 \le m-1$.

As in the proof of Proposition \ref{proposition.upper1.even}, we obtain that $D_1$ and $D_2$ are are nonzero squares in $\F_q^*$. Putting  $D_1=E_1^2$, $D_2^2=E_2^2$ with $E_1,E_2 \in \F_q^*$ and using the change of variables $x_1 E_1 \mapsto x_1$, $x_2 E_1 \mapsto x_2$, $y_1 E_2 \mapsto y_1$ and
$y_2 E_2 \mapsto y_2$ we complete the proof.
\end{proof}

\begin{proposition} \label{proposition2.determineCR}
We keep the notation and assumptions of Theorem \ref{theorem1.determineCR}. Assume that $0 \le i \le 2^\ell -1$ is an odd integer.
Let $D \in \F_q^*$ be a nonsquare in $\F_q$.
Then Property NPi is equivalent to the following: There exists $\alpha \in \F_q^*$ such that the system
\begin{align*}
x_1 + x_2 & =  \alpha, \\ 
y_1 + y_2 & =  0, \\ 
x_1^2 + y_1^2 & =  Da^{2j_1}, \\ 
x_2^2 + y_2^2 & =  Da^{2j_2}
\end{align*}
is not solvable with $x_1,x_2,y_1,y_2 \in \F_q$ for any integers $0 \le j_1,j_2 \le m-1$.
\end{proposition}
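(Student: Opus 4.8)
The plan is to mirror the proof of Proposition~\ref{proposition1.determineCR} almost verbatim, changing only the two places where the parity of $i$ enters --- exactly as the proof of Proposition~\ref{proposition.upper1.odd} diverges from that of Proposition~\ref{proposition.upper1.even}. First I would reuse the basis $\{\omega_1,\omega_2\}$ of $\F_{q^2}$ over $\F_q$ constructed in Proposition~\ref{proposition.upper1.even}, together with the constants $D_1=\omega_1^{q+1}$ and $D_2=\omega_2^{q+1}$. Writing $h_1=x_1\omega_1+y_1\omega_2$ and $h_2=x_2\omega_1+y_2\omega_2$ with $x_1,y_1,x_2,y_2\in\F_q$, the identity $\omega_1^{q-1}+\omega_2^{q-1}=\theta^i-\theta^i=0$ (valid for odd $i$ as well) gives $h_j^{q+1}=D_1x_j^2+D_2y_j^2$ for $j=1,2$. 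As in Proposition~\ref{proposition.upper1.even}, any $\gamma\in\F_{q^2}^*$ with $\gamma^q=\theta^i\gamma$ expands as $\gamma=\alpha\omega_1$ with $\alpha\in\F_q^*$, and, as in the proof of Proposition~\ref{proposition1.determineCR}, $h_j\in H_m$ is equivalent to $h_j^{q+1}=a^{2j_\ast}$ for some integer $0\le j_\ast\le m-1$. Putting these together, Property NPi is equivalent to: there exists $\alpha\in\F_q^*$ such that the system $x_1+x_2=\alpha$, $y_1+y_2=0$, $D_1x_1^2+D_2y_1^2=a^{2j_1}$, $D_1x_2^2+D_2y_2^2=a^{2j_2}$ has no solution $x_1,x_2,y_1,y_2\in\F_q$ for any integers $0\le j_1,j_2\le m-1$.

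The one genuine change is the square/non-square status of $D_1$ and $D_2$. From the formulas $D_1=g_1^{\frac{q+1}{2^\ell}i}$ and $D_2=g_1^{\frac{q+1}{2^\ell}i}g_1^{\frac{q+1}{2}}$ of Proposition~\ref{proposition.upper1.even} (with $g_1$ a generator of $\F_q^*$), together with the facts that $\frac{q+1}{2^\ell}$ is odd and $\frac{q+1}{2}$ is even (both consequences of $\ell\ge 3$ and Eq.~(\ref{Assumption3})), the hypothesis that $i$ is odd forces $D_1$ and $D_2$ to be non-squares in $\F_q^*$. Since $D\in\F_q^*$ is a non-square, $DD_1$ and $DD_2$ are nonzero squares, so I may write $D_1=E_1^2/D$ and $D_2=E_2^2/D$ with $E_1,E_2\in\F_q^*$.

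Finally I would apply the change of variables $E_1x_1\mapsto x_1$, $E_1x_2\mapsto x_2$, $E_2y_1\mapsto y_1$, $E_2y_2\mapsto y_2$, exactly as in the even case. This turns each equation $D_1x_j^2+D_2y_j^2=a^{2j_\ast}$ into $x_j^2+y_j^2=Da^{2j_\ast}$, fixes $y_1+y_2=0$, and turns $x_1+x_2=\alpha$ into $x_1+x_2=E_1\alpha$; since $\alpha\mapsto E_1\alpha$ is a bijection of $\F_q^*$, replacing $E_1\alpha$ by a new free parameter $\alpha$ leaves the quantifier ``there exists $\alpha\in\F_q^*$'' and the (non-)solvability over $\F_q$ untouched. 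This yields precisely the system in the statement, completing the proof.

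I do not anticipate a real obstacle: the argument is a routine transcription of Proposition~\ref{proposition1.determineCR}, and the only points needing care are bookkeeping ones --- namely confirming that odd $i$ makes $D_1,D_2$ non-squares (so that multiplying by the non-square $D$ produces squares) and checking that the linear substitution sends the right-hand sides $a^{2j_\ast}$ to $Da^{2j_\ast}$ while preserving both linear constraints and the quantifier over $\alpha$.
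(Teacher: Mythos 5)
Your proposal is correct and follows essentially the same route as the paper: reduce to the system with coefficients $D_1,D_2$ exactly as in Proposition \ref{proposition1.determineCR}, observe that odd $i$ makes $D_1,D_2$ non-squares, write $D_1=E_1^2/D$, $D_2=E_2^2/D$, and rescale the variables. Your explicit remark that the substitution sends $\alpha$ to $E_1\alpha$ and that this bijection of $\F_q^*$ preserves the existential quantifier is a detail the paper leaves implicit, but it is the same argument.
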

\begin{proof}
The proof is similar to the proof of Proposition \ref{proposition1.determineCR}. Using the same methods we obtain the equivalent system in Eq. (\ref{ep3.proposition1.determineCR}).

A main difference is that, as in the differences of the proofs of Propostions \ref{proposition.upper1.even} and \ref{proposition.upper1.odd}, instead of the fact that $D_1$ and $D_2$ are  nonzero squares in $\F_q^*$, which holds in Proposition \ref{proposition1.determineCR}, now we have
\be
D_1  \; \mbox{and} \;D_2 \;  \mbox{are non-squares in} \; \F_q^*,
\nn\ee
as $i$ is odd.

Putting  $D_1=\frac{E_1^2}{D}$, $D_2^2=\frac{E_2^2}{D}$ with $E_1,E_2 \in \F_q^*$ and using the same change of variables as in the proof of Proposition \ref{proposition1.determineCR}, we complete the proof.
\end{proof}

The proof of the following theorem uses Proposition \ref{proposition1.determineCR}.

\begin{theorem} \label{theorem2.determineCR}
We keep the notation and assumptions of Theorem \ref{theorem1.determineCR}. Assume that $0 \le i \le 2^\ell -1$ is an even integer.
Let $\alpha_1, \ldots, \alpha_m$ be an enumeration of all nonzero squares in $\F_{q_0}$. Then Property NPi is equivalent to the following: The system
\be
\begin{array}{rcl}
y_1^2 & = & x^2-\alpha_1, \\
y_2^2 & = & x^2-\alpha_2, \\
& \vdots & \\
y_m^2 & = & x^2-\alpha_m,
\end{array}
\nn\ee
is  solvable with $x,y_1,y_2, \ldots, y_m\in \F_q^*$.
\end{theorem}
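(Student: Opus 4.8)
The plan is to invoke Proposition~\ref{proposition1.determineCR} to reformulate Property NPi, then eliminate variables in the resulting system until only a product of two quadratic characters survives, and finally to recognize that condition as exactly the solvability of the system in the statement. Since $a$ generates $\F_{q_0}^*$, the set $\{a^{2j}:0\le j\le m-1\}$ equals the set of nonzero squares of $\F_{q_0}$, so Proposition~\ref{proposition1.determineCR} says Property NPi is equivalent to the existence of $\alpha\in\F_q^*$ such that, for every pair $(\beta_1,\beta_2)$ of nonzero squares in $\F_{q_0}$, the system $x_1+x_2=\alpha$, $y_1+y_2=0$, $x_1^2+y_1^2=\beta_1$, $x_2^2+y_2^2=\beta_2$ has no solution over $\F_q$.

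First I would study this system for a fixed $\alpha\in\F_q^*$ and fixed $\beta_1,\beta_2$. Setting $y:=y_1=-y_2$ and subtracting the last two equations gives $x_1^2-x_2^2=\beta_1-\beta_2$; since $\alpha\ne0$ this forces $x_1-x_2=(\beta_1-\beta_2)/\alpha$, so $x_1=\frac{\alpha^2+\beta_1-\beta_2}{2\alpha}$ and $x_2=\alpha-x_1$ are determined, and the system is solvable if and only if $\beta_1-x_1^2$ is a square in $\F_q$ (the value $0$ allowed). Writing $\beta_1=\mu_1^2$ and $\beta_2=\mu_2^2$ with $\mu_1,\mu_2\in\F_{q_0}^*$, I would then verify by a direct computation the identity
\[
4\alpha^2\bigl(\beta_1-x_1^2\bigr)=-\bigl(\alpha^2-(\mu_1+\mu_2)^2\bigr)\bigl(\alpha^2-(\mu_1-\mu_2)^2\bigr);
\]
note it is insensitive to the choice of square roots, since replacing $\mu_1$ by $-\mu_1$ merely interchanges the two factors. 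As $q\equiv 3\pmod{4}$ (because $q\equiv 7\pmod{8}$) we have $\eta(-1)=-1$, and $4\alpha^2$ is a nonzero square, so this identity shows that the system is \emph{unsolvable} precisely when $\eta\bigl(\alpha^2-(\mu_1+\mu_2)^2\bigr)\,\eta\bigl(\alpha^2-(\mu_1-\mu_2)^2\bigr)=1$.

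It then remains to match this product condition with the displayed system. For the implication ``the displayed system is solvable $\Rightarrow$ Property NPi'', take $x\in\F_q^*$ with $x^2-\alpha_k$ a nonzero square for every $k$ and put $\alpha:=2x$; then $\alpha^2-(\mu_1\pm\mu_2)^2=4\bigl(x^2-(\tfrac{\mu_1\pm\mu_2}{2})^2\bigr)$ is a nonzero square, because $(\tfrac{\mu_1\pm\mu_2}{2})^2$ is either $0$ (so the expression is $4x^2$) or a nonzero square of $\F_{q_0}$, hence one of the $\alpha_k$; thus both characters equal $1$, their product is $1$, and the system is unsolvable for every admissible $(\beta_1,\beta_2)$. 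Conversely, given $\alpha$ witnessing Property NPi, specialize $\mu_1=\mu_2=\mu$ for an arbitrary $\mu\in\F_{q_0}^*$; the product condition collapses to $\eta(\alpha^2-4\mu^2)=1$, i.e., with $x:=\alpha/2\in\F_q^*$, that $x^2-\mu^2$ is a nonzero square for every $\mu\in\F_{q_0}^*$, which is exactly the displayed system (take each $y_k$ to be a square root of $x^2-\alpha_k$ in $\F_q^*$, using $\{\mu^2:\mu\in\F_{q_0}^*\}=\{\alpha_1,\dots,\alpha_m\}$). The only genuine computation is the factorization identity above, together with tracking the sign $\eta(-1)=-1$ and the bookkeeping of the degenerate cases $\mu_1=\pm\mu_2$; I expect these to be the main---though routine---points, the rest being formal manipulation with the quadratic character.
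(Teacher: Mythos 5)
Your proposal is correct and follows essentially the same route as the paper: both reduce Property NPi via Proposition~\ref{proposition1.determineCR} to the condition that a certain quartic in $\alpha$ be a nonzero square, factor that quartic as $\bigl(\alpha^2-(\mu_1+\mu_2)^2\bigr)\bigl(\alpha^2-(\mu_1-\mu_2)^2\bigr)$ up to sign (the paper writes the equivalent four-linear-factor form with $u=a^{j_1}+a^{j_2}$, $v=a^{j_2}-a^{j_1}$), and observe that the diagonal case $\beta_1=\beta_2$ yields exactly the displayed system while the off-diagonal cases follow from the diagonal ones. The only cosmetic difference is your rescaling $\alpha=2x$, which the paper absorbs by noting that $\{4a^{2j}:0\le j\le m-1\}$ already enumerates all nonzero squares of $\F_{q_0}$.
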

\begin{proof}
Consider  the system in Proposition \ref{proposition1.determineCR} and put
\be 
x=x_2, \; y=y_2, \; x_1=x-\alpha, \; y_1=-y.
\nn\ee
Then the system in Proposition \ref{proposition1.determineCR} is equivalent to the system
\begin{align*}
(x-\alpha)^2 + y^2 & =  a^{2j_1}, \\ 
x^2+y^2 & =   a^{2j_2}.
\end{align*}
The last systems is equivalent to the system
\begin{align*}
-2x\alpha + \alpha^2& =  a^{2j_1}-a^{2j_2}, \\ 
-y^2 & =  x^2-a^{2j_2},
\end{align*}
which means
\be \label{ep1.theorem2.determineCR}
\begin{array}{rcl}
x & = & \frac{\alpha^2 + a^{2j_2} - a^{2 j_1}}{2\alpha},\vspace{1mm}\\
-y^2 & = & x^2-a^{2j_2} = \left( \frac{\alpha^2 + a^{2j_2} - a^{2 j_1}}{2\alpha} \right)^2 - a^{2j_2}.
\end{array}
\ee

Note that $-1$ is not a square in $\F_q$ as $\ell \ge 3$ and Eq. (\ref{Assumption3}) holds, which imply $q \equiv 3 \mod 4$. Hence Property NPi is equivalent to the following: There exists $\alpha \in \F_q^*$ such that
\be \label{ep2.theorem2.determineCR}
\left(\alpha^2 + a^{2j_2} - a^{2j_1}\right)^2 -4 \alpha^2 a^{2j_2}
\ee
is a square in $\F_q^*$ for any integers $0 \le j_1,j_2 \le m-1$.

We have the factorization of Eq. (\ref{ep2.theorem2.determineCR}) given by
\be \label{ep3.theorem2.determineCR}
\begin{array}{l}
\left(\alpha^2 + a^{2j_2} - a^{2j_1}\right)^2  -4\alpha^2 a^{2j_2}\vspace{1mm}\\
= \left(\alpha + a^{j_2} + a^{j_1}\right)\left(\alpha + a^{j_2} - a^{j_1}\right)\left(\alpha - a^{j_2} + a^{j_1}\right)\left(\alpha - a^{j_2} - a^{j_1}\right).
\end{array}
\ee

If $0 \le j_1=j_2 \le m-1$, then Eq. (\ref{ep3.theorem2.determineCR}) becomes
\be
\left(\alpha + 2a^{j_1}\right)\left(\alpha - 2a^{j_1}\right)= \alpha^2 -4a^{2j_1}.
\nn\ee
Note that $\{4a^{2j_1}: 0 \le j_1 \le m-1\}$ is the set $\{\alpha_1, \ldots, \alpha_m\}$ of all nonzero squares in $\F_{q_0}^*$.

The condition in the statement of the theorem is equivalent to the condition that there exists $\alpha \in \F_q^*$ such that the value in Eq. (\ref{ep3.theorem2.determineCR})
is a nonzero square in $\F_q^*$ for any integer $0 \le j_1=j_2 \le m-1$.
For $0 \le j_1, j_2 \le m-1$ with $j_1 \neq j_2$, put $u=a^{j_2} + a^{j_1}$ and $v=a^{j_2}-a^{j_1}$. Note that $u,v \in \F_q^*$. Moreover let $u^2 =\alpha_{i_0}$ and $v^2=\alpha_{i_1}$. We have
\begin{align*}
&\left(\alpha + a^{j_2} + a^{j_1}\right)\left(\alpha + a^{j_2} - a^{j_1}\right)\left(\alpha - a^{j_2} + a^{j_1}\right)\left(\alpha - a^{j_2} - a^{j_1}\right) \\ 
&= (\alpha + u)(\alpha+v)(\alpha-v)(\alpha-u) \\ 
&=\left(\alpha^2 -\alpha_{i_0}\right)\left(\alpha^2-\alpha_{i_1}\right).
\end{align*}
This implies that if both $\left(\alpha^2 -\alpha_{i_0}\right)$ and $\left(\alpha^2-\alpha_{i_1}\right)$ are nonzero squares in $\F_q^*$, then the value in (\ref{ep2.theorem2.determineCR}) is a nonzero square in $\F_q^*$. This completes the proof.
\end{proof}

The proof of the following theorem uses Proposition \ref{proposition2.determineCR}.

\begin{theorem} \label{theorem3.determineCR}
We keep the notation and assumptions of Theorem \ref{theorem1.determineCR}. Assume that $0 \le i \le 2^\ell -1$ is an odd integer.
Let $\beta_1, \ldots, \beta_m$ be an enumeration of all nonzero non-squares in $\F_{q_0}$. Then Property NPi is equivalent to the following: The system
\be
\begin{array}{rcl}
y_1^2 & = & x^2-\beta_1, \\
y_2^2 & = & x^2-\beta_2, \\
& \vdots & \\
y_m^2 & = & x^2-\beta_m,
\end{array}
\nn\ee
is  solvable with $x,y_1,y_2, \ldots, y_m\in \F_q^*$.
\end{theorem}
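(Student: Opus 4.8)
The plan is to follow the proof of Theorem~\ref{theorem2.determineCR} almost line for line, using Proposition~\ref{proposition2.determineCR} in place of Proposition~\ref{proposition1.determineCR} and replacing ``nonzero square'' by ``nonzero non-square'' throughout. Since the non-square $D$ in Proposition~\ref{proposition2.determineCR} is at our disposal, I would first fix $D\in\F_{q_0}^*$ to be a non-square in $\F_{q_0}$; because $s$ is odd, such a $D$ is also a non-square in $\F_q$ (writing $(q-1)/2=\frac{q_0-1}{2}(q_0^{s-1}+\cdots+1)$, the second factor is a sum of $s$ odd numbers, hence odd, so an element of $\F_{q_0}^*$ is a square in $\F_q$ iff it is one in $\F_{q_0}$). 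Starting from the system in Proposition~\ref{proposition2.determineCR}, the substitution $x=x_2$, $y=y_2$, $x_1=x-\alpha$, $y_1=-y$ trivializes the first two equations and reduces the rest to $(x-\alpha)^2+y^2=Da^{2j_1}$ and $x^2+y^2=Da^{2j_2}$; subtracting and using $\alpha\neq 0$ gives
\be
x=\frac{\alpha^2+Da^{2j_2}-Da^{2j_1}}{2\alpha},\qquad -y^2=x^2-Da^{2j_2}.
\nn\ee

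Next, $q\equiv 2^\ell-1\pmod{2^{\ell+1}}$ with $\ell\geq 3$ forces $q\equiv 3\pmod 4$, so $-1$ is a non-square in $\F_q$ and $-y^2$ ranges over $\{0\}\cup\{\text{non-squares}\}$; hence the reduced system is \emph{not} solvable precisely when $x^2-Da^{2j_2}$ is a nonzero square. Multiplying by the nonzero square $4\alpha^2$ and using $2\alpha x=\alpha^2+Da^{2j_2}-Da^{2j_1}$, Property NP$i$ becomes: there exists $\alpha\in\F_q^*$ such that
\be
\left(\alpha^2+Da^{2j_2}-Da^{2j_1}\right)^2-4\alpha^2 Da^{2j_2}
\nn\ee
is a nonzero square in $\F_q$ for all $0\leq j_1,j_2\leq m-1$.

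The heart of the argument is the factorization
\be
\left(\alpha^2+Da^{2j_2}-Da^{2j_1}\right)^2-4\alpha^2 Da^{2j_2}=\left(\alpha^2-D(a^{j_1}+a^{j_2})^2\right)\left(\alpha^2-D(a^{j_1}-a^{j_2})^2\right),
\nn\ee
checked by direct expansion; it is the analogue of Eq.~(\ref{ep3.theorem2.determineCR}) with the square root of $D$ (which does not lie in $\F_q$) absorbed into the quadratic factors. For $j_1=j_2=j$ the right-hand side is $\alpha^2(\alpha^2-4Da^{2j})$, and since $a$ generates $\F_{q_0}^*$, $4$ is a square there while $D$ is not, the $m$ elements $4Da^{2j}$ ($0\leq j\leq m-1$) are distinct and run over $\{\beta_1,\ldots,\beta_m\}$. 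As $\alpha^2$ is a nonzero square, the $j_1=j_2$ conditions therefore say exactly that $\alpha^2-\beta_k$ is a nonzero square for every $k$, i.e. the displayed system is solvable with $x=\alpha$ and suitable $y_k\in\F_q^*$. For $j_1\neq j_2$ one has $a^{j_1}\pm a^{j_2}\neq 0$ (otherwise $a^{|j_1-j_2|}=-1=a^m$, impossible as $0<|j_1-j_2|\leq m-1<m$), so $D(a^{j_1}+a^{j_2})^2$ and $D(a^{j_1}-a^{j_2})^2$ are nonzero non-squares of $\F_{q_0}$, hence lie in $\{\beta_1,\ldots,\beta_m\}$; thus whenever the system is solvable, the factorization exhibits the displayed quantity as a product of two nonzero squares. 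Conversely, solvability of the system is contained in the $j_1=j_2$ conditions. Combining the two implications gives the stated equivalence.

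I expect the main obstacle to be purely a matter of bookkeeping: verifying the factorization identity cleanly and tracking square versus non-square status across $\F_{q_0}\subseteq\F_q$ — in particular that $s$ odd preserves non-squares, that $\{4Da^{2j}:0\leq j\leq m-1\}$ enumerates all of $\{\beta_1,\ldots,\beta_m\}$ without repetition, and the degenerate cases $a^{j_1}=\pm a^{j_2}$. Conceptually the proof is identical to that of Theorem~\ref{theorem2.determineCR}.
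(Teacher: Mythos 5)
Your proposal is correct and follows essentially the same route as the paper: reduce via Proposition \ref{proposition2.determineCR}, use $q\equiv 3\pmod 4$ to turn non-solvability into a nonzero-square condition, factor the discriminant, and observe that the $j_1\neq j_2$ conditions follow from the $j_1=j_2$ ones. The only (cosmetic) difference is that you write the factorization directly as a product of two quadratics over $\F_q$ rather than four linear factors involving $E=\sqrt{D}\in\F_{q^2}$, and you make explicit a few points the paper leaves implicit (choosing $D\in\F_{q_0}$, persistence of non-squares under odd-degree extensions, and $a^{j_1}\pm a^{j_2}\neq 0$).
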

\begin{proof}
The proof is similar to the proof of Theorem \ref{theorem2.determineCR}. Putting $x,y$ as in Eq. (\ref{ep1.theorem2.determineCR}) we obtain
\be \label{ep1.theorem3.determineCR}
\begin{array}{rcl}
x & = & \frac{\alpha^2 + Da^{2j_2} - Da^{2 j_1}}{2\alpha}, \vspace{1mm} \\
-y^2 & = & x^2-Da^{2j_1} = \left( \frac{\alpha^2 + Da^{2j_2} - Da^{2 j_1}}{2\alpha} \right)^2 - Da^{2j_2}.
\end{array}
\ee
instead of Eq. (\ref{ep1.theorem2.determineCR}).

Let $E \in \F_{q^2} \setminus \F_q$ such that $E^2=D$. Using the same methods as in the proof of Theorem \ref{theorem2.determineCR} we obtain
\be \label{ep3.theorem3.determineCR}
\begin{array}{l}
\left(\alpha^2 + Da^{2j_2} - Da^{2j_1}\right)^2  -4D\alpha^2 a^{2j_2}\vspace{1mm} \\
= \left(\alpha + Ea^{j_2} + Ea^{j_1}\right)\left(\alpha + Ea^{j_2} - Ea^{j_1}\right)\left(\alpha - Ea^{j_2} + Ea^{j_1}\right)\left(\alpha - Ea^{j_2} - Ea^{j_1}\right).
\end{array}
\ee
instead of Eq. (\ref{ep3.theorem2.determineCR}).

If $0 \le j_1=j_2 \le m-1$, then Eq. (\ref{ep3.theorem3.determineCR}) becomes
\be
\left(\alpha + 2E a^{j_1}\right)\left(\alpha - 2 E a^{j_1}\right)= \alpha^2 -4Da^{2j_1}.
\nn\ee
Note that $\{4Da^{2j_1}: 0 \le j_1 \le m-1\}$ is the set $\{\beta_1, \ldots, \beta_m\}$ of all nonzero non-squares in $\F_{q_0}^*$.

For $0 \le j_1, j_2 \le m-1$ with $j_1 \neq j_2$, put $u=E\left(a^{j_2} + a^{j_1}\right)$ and $v=E\left(a^{j_2}-a^{j_1}\right)$. Note that $u^2,v^2 \in \{\beta_1, \ldots, \beta_m\}$ and $u^2 \neq v^2$. Put $u^2 =\beta_{i_0}$ and $v^2=\beta_{i_1}$. We have
\be
\begin{array}{l}
\left(\alpha + Ea^{j_2} + Ea^{j_1}\right)\left(\alpha + Ea^{j_2} - Ea^{j_1}\right)\left(\alpha - Ea^{j_2} + Ea^{j_1}\right)\left(\alpha - Ea^{j_2} - Ea^{j_1}\right) \vspace{1mm} \\
=\left(\alpha^2 -\beta_{i_0}\right)\left(\alpha^2-\beta_{i_1}\right).
\end{array}
\nn\ee
We complete the proof using similar arguments as in the proof of Theorem \ref{theorem2.determineCR}. Note that we apply Proposition \ref{proposition2.determineCR} instead of Proposition \ref{proposition1.determineCR}.
\end{proof}

Next we immediately determine the covering radius of $\mathcal{C}_s(q_0)$ if $s=1$.

\begin{corollary} \label{corollary1.determineCR}
We keep the notation and assumptions of Theorem \ref{theorem1.determineCR}. Then the covering radius of $\mathcal{C}_s(q_0)$ is $2$ if $s=1$.
\end{corollary}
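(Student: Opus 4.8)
The plan is to invoke Theorem~\ref{theorem1.determineCR}, which says that the covering radius of $\mathcal{C}_s(q_0)$ equals $2$ unless Property~NPi holds for some index $0 \le i \le 2^{\ell}-1$, in which case it equals $3$. So it suffices to show that, when $s=1$, Property~NPi \emph{fails} for every such $i$. The observation that makes this easy is that for $s=1$ we have $q = q_0$, hence $\F_q = \F_{q_0}$; therefore the scalars $\alpha_1,\dots,\alpha_m$ (resp. $\beta_1,\dots,\beta_m$) occurring in Theorems~\ref{theorem2.determineCR} and~\ref{theorem3.determineCR} are \emph{all} the nonzero squares (resp. all the nonzero non-squares) of the very field $\F_q$ over which the associated systems must be solved. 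This forced collision is the heart of the argument.

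First I would handle even $i$. By Theorem~\ref{theorem2.determineCR}, Property~NPi is equivalent to the solvability of $y_j^2 = x^2 - \alpha_j$ for $1 \le j \le m$ with $x, y_1, \dots, y_m \in \F_q^*$, where $\alpha_1,\dots,\alpha_m$ are all nonzero squares of $\F_q$. If such an $x$ existed, then $x^2$ would itself be a nonzero square of $\F_q$, so $x^2 = \alpha_{j_0}$ for some index $j_0$, forcing $y_{j_0}^2 = 0$ and hence $y_{j_0} = 0 \notin \F_q^*$ --- a contradiction. Thus Property~NPi fails for every even $i$.

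Next I would handle odd $i$. By Theorem~\ref{theorem3.determineCR}, Property~NPi is equivalent to the solvability of $y_j^2 = x^2 - \beta_j$ for $1 \le j \le m$ with $x, y_1, \dots, y_m \in \F_q^*$, where $\beta_1,\dots,\beta_m$ are now all nonzero non-squares of $\F_q$. Suppose a solution existed. Since $x \neq 0$, the element $x^2$ is a nonzero square, and each $x^2 - \beta_j = y_j^2$ is a nonzero square as well. The $m$ elements $x^2 - \beta_1,\dots,x^2 - \beta_m$ are pairwise distinct (because the $\beta_j$ are) and all lie in the $m$-element set of nonzero squares of $\F_q$, so they exhaust that set; in particular $x^2$, being a nonzero square, equals $x^2 - \beta_{j_0}$ for some $j_0$, whence $\beta_{j_0} = 0$, contradicting that the $\beta_j$ are nonzero. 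Hence Property~NPi fails for every odd $i$ too.

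Combining the two cases, Property~NPi fails for all $0 \le i \le 2^{\ell}-1$, so Theorem~\ref{theorem1.determineCR} gives $\rho(\mathcal{C}_1(q_0)) = 2$, as claimed. I do not expect a genuine obstacle here; the only point requiring a little care is the odd case, where --- unlike the even case, in which $x^2 - \alpha_j$ vanishes outright for one index --- the quantity $x^2 - \beta_j$ is never zero, so one cannot obtain an immediate contradiction and must instead run the short counting/exhaustion argument to locate $x^2$ itself among the values $x^2 - \beta_j$.
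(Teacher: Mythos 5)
Your proof is correct and follows essentially the same route as the paper: reduce via Theorem \ref{theorem1.determineCR} to disproving Property NPi for all $i$, use $\F_q=\F_{q_0}$ to force $x^2\in\{\alpha_1,\dots,\alpha_m\}$ in the even case, and in the odd case run the counting argument showing $\{x^2-\beta_1,\dots,x^2-\beta_m\}$ exhausts the nonzero squares so that $\beta_{j_0}=0$ for some $j_0$. No discrepancies worth noting.
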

\begin{proof}
Let $s=1$, which means $\F_{q_0}=\F_q$. Using Theorem \ref{theorem1.determineCR}, it is enough to show that neither the system in Theorem \ref{theorem2.determineCR} is solvable nor the system in Theorem \ref{theorem3.determineCR}.

First consider the system in Theorem \ref{theorem2.determineCR}. Assume the contrary that $x,y_1, \ldots, y_m \in \F_q^*$ is a solution. Then $x^2 \in \{\alpha_1, \ldots, \alpha_m\}$ as $\F_q=\F_{q_0}$. If $x^2=\alpha_{i_0}$ with $1 \le i_0 \le m$, then $y_{i_0}=0$, which is a contradiction.

Next consider the system in Theorem \ref{theorem3.determineCR}. Assume the contrary that $x,y_1, \ldots, y_m \in \F_q^*$ is a solution. Then
the set $\{y_1^2, \ldots, y_m^2\}$ has cardinality $m$ as the set $\{\beta_1, \ldots, \beta_m\}$ has cardinality $m$. This implies that the set $\{y_1^2, \ldots, y_m^2\}$
is exactly the set $\{\alpha_1, \ldots, \alpha_m\}$, where we use the fact that $\F_q=\F_{q_0}$. Therefore $x^2 \in \{y_1^2, \ldots, y_m^2\}$ and hence there exists $1 \le i_0 \le m$ such that $\beta_{i_0}=0$, which is a contradiction. This completes the proof.
\end{proof}

Recall that $\ell \ge 3$ is an integer, $\F_{q_0}$ is a finite field satisfying Eq. (\ref{Assumption1}) and $m=(q_0-1)/2$. Let $s^*$ be the smallest odd integer such that
\be \label{s*.determinedCR}
q_0^{s^*}+1 -2 \left( 1 + 2^{m-1}(m-2)\right)q_0^{s^*/2} > 2^m.
\ee

Now we are ready to prove the main result of this section.

\begin{theorem} \label{theorem.main.determineCR}
We keep the notation and assumptions of Theorem \ref{theorem1.determineCR}. Let $s^*$ be the (positive) odd integer defined in Eq. (\ref{s*.determinedCR}).

If $s \ge s^*$ is an odd integer, then the covering radius of $\mathcal{C}_s(q_0)$ is $3$.
\end{theorem}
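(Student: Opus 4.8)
The plan is to deduce the statement from the upper bound already in hand together with the reformulations of Property~NP$i$. By Corollary~\ref{corollary.upper} we have $\rho(\mathcal{C}_s(q_0))\le 3$, and by Theorem~\ref{theorem1.determineCR} it equals $3$ as soon as Property~NP$i$ holds for one index $0\le i\le 2^\ell-1$. I would take $i=0$ (an even index), so that by Theorem~\ref{theorem2.determineCR} the whole problem reduces to showing that the system
\begin{equation*}
y_1^2=x^2-\alpha_1,\quad y_2^2=x^2-\alpha_2,\quad\ldots,\quad y_m^2=x^2-\alpha_m
\end{equation*}
is solvable with $x,y_1,\ldots,y_m\in\F_q^*$, where $\alpha_1,\ldots,\alpha_m$ is an enumeration of the nonzero squares of $\F_{q_0}$.

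Next I would realize this system as an algebraic curve and invoke the Hasse--Weil bound. Choose $b_j\in\F_{q_0}^*$ with $b_j^2=\alpha_j$, so that $\{\pm b_1,\ldots,\pm b_m\}=\F_{q_0}^*$ and hence $\F_{q_0}=\{0,\pm b_1,\ldots,\pm b_m\}$. Let $\widetilde C$ be the smooth projective model over $\F_q$ of the affine fibre product $C$ of the $m$ rational curves $C_j:y_j^2=x^2-\alpha_j$ above the $x$-line, with the degree-$2^m$ cover $\pi\colon\widetilde C\to\mathbb{P}^1$ recording $x$. This cover is unramified over $x=0$ and over $x=\infty$ and is ramified precisely over the $2m$ points $x=\pm b_j$, with $2^{m-1}$ points of ramification index $2$ above each; Riemann--Hurwitz then gives $2g(\widetilde C)-2=2^m(m-2)$, i.e.\ $g(\widetilde C)=1+2^{m-1}(m-2)$. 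Along the way one checks that $\widetilde C$ is geometrically irreducible, which holds because the quadratic extensions $\overline{\F}_q(x)(\sqrt{x^2-\alpha_j})$ have pairwise disjoint branch loci $\{\pm b_j\}$ and are therefore linearly disjoint.

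Then I would classify $\widetilde C(\F_q)$. Above $x=\infty$ there are exactly $2^m$ points, all $\F_q$-rational (the branches with $y_j/x\to\pm1$). Above $x=0$ there are none, since each $-\alpha_j$ is a non-square in $\F_q$ (recall $q\equiv 3\pmod 4$ as $\ell\ge 3$). Crucially, above the ramification values $x=\pm b_j$ there are also none: an $\F_q$-rational point above $x=b_j$ would force every $\alpha_j-\alpha_k$ $(k\ne j)$ to be a square in $\F_q$, hence, $s$ being odd, a square already in $\F_{q_0}$; together with $\alpha_j=b_j^2-0^2$ this would make $b_j^2-x^2$ a nonzero square for all $x\in\F_{q_0}\setminus\{\pm b_j\}$, i.e.\ for $q_0-2$ values of $x$, whereas summing the quadratic character of $\F_{q_0}$ over $x\mapsto b_j^2-x^2$ yields $1$ (because $-1$ is a non-square in $\F_{q_0}$), so only $(q_0-1)/2=m$ values of $x$ can qualify; this is impossible for $q_0\ge 7$. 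Every remaining $\F_q$-point of $\widetilde C$ lies above some $x\in\F_q\setminus\F_{q_0}$ for which all $x^2-\alpha_j$ are nonzero squares in $\F_q$, and each such $x$ contributes exactly $2^m$ points, each of which automatically has $x\in\F_q^*$ and all $y_j\in\F_q^*$. Writing $M$ for the number of such $x$, we obtain $\#\widetilde C(\F_q)=2^mM+2^m$, and $M\ge 1$ is precisely the solvability asserted above.

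Finally, the Hasse--Weil bound gives
$\#\widetilde C(\F_q)\ge q+1-2g(\widetilde C)\sqrt q=q_0^{s}+1-2\bigl(1+2^{m-1}(m-2)\bigr)q_0^{s/2}$,
and by the very definition of $s^*$ this is strictly larger than $2^m$ whenever $s\ge s^*$. Hence $2^mM=\#\widetilde C(\F_q)-2^m>0$, so $M\ge 1$, Property~NP$0$ holds, and therefore $\rho(\mathcal{C}_s(q_0))=3$. I expect the delicate part to be the point classification of the third paragraph — in particular the vanishing of the contribution of the ramification fibres, which is exactly what lets the crude threshold $2^m$ in the definition of $s^*$ suffice — together with pinning down the genus correctly, i.e.\ handling the normalisation of the fibre product and its geometric irreducibility with care.
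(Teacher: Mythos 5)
Your proof is correct and follows essentially the same route as the paper: reduce via Theorem \ref{theorem1.determineCR} and Theorem \ref{theorem2.determineCR} to Property NP$i$ for an even index, count $\F_q$-points on the fibre product $y_j^2=x^2-\alpha_j$, and apply the Hasse--Weil bound with genus $1+2^{m-1}(m-2)$ against the threshold $2^m$ built into the definition of $s^*$. The only difference is that you supply in full the point classification over the special fibres, the genus computation, and the irreducibility check, which the paper outsources to \cite[Lemma IV.1, Theorem IV.1]{SHO2023IT}.
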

\begin{proof}
We use some technical but important results using the theory of algebraic curves over finite fields. We refer to \cite{G-Sti} and Appendix C of \cite{SHO2023IT} for further background on algebraic curves.

We will show that if $s\ge s^*$ is an odd integer, then Property NPi holds for each even integer $i$ in the range $0 \le i \le 2^\ell-1$, which completes the proof using Theorem \ref{theorem1.determineCR}.

Let $\alpha_1, \ldots, \alpha_m$ be an enumeration of all nonzero squares in $\F_{q_0}$. Let $\chi$ be the fibre product of projective lines over $\F_q$ given by
\be
\begin{array}{rcl}
\chi & : & \left\{
\begin{array}{rcl}
y_1^2 &=& x^2 - \alpha_1, \\
y_2^2 &=& x^2 - \alpha_2, \\
& \vdots & \\
y_m^2 &=& x^2 - \alpha_m.
\end{array}
 \right.
\end{array}
\nn\ee

Let $P_\infty$ be the pole of $x$ in $\F_q(x)$. For $u \in \F_q$, let $P_u$ be the zero of $(x-u)$ in $\F_q(x)$.

Using \cite[Lemma IV.1]{SHO2023IT} and the methods in the proof of \cite[Theorem IV.1]{SHO2023IT}, we conclude that there is no $\F_q$-rational point of $\chi$ over $P_u$ if $u \in \{0,\alpha_1, \ldots, \alpha_m\}$. Similarly there are exactly $2^m$ $\F_q$-rational points of $\chi$ over $P_\infty$. Moreover if $u \in \F_q \setminus\{0,\alpha_1, \ldots, \alpha_m\}$, then there are either no or exactly $2^m$ $\F_q$-rational points of $\chi$ over $P_u$. Furthermore, if $u \in \F_q \setminus\{0,\alpha_1, \ldots, \alpha_m\}$, then there exist exactly $2^m$ $\F_q$-rational points of $\chi$ over $P_u$ if and only if $(u^2 -\alpha_i)$ is a nonzero square in $\F_q$ for each $1 \le i \le m$.

Let $N(s)$ denote the number of $u \in \F_q$ such that $(u^2 -\alpha_i)$ is a nonzero square in $\F_q$ for each $1 \le i \le m$.
Let $N(\chi;s)$ denote the number of $\F_q$-rational points of $\chi$.
The arguments above imply that we have
\be \label{ep1.theorem.main.determineCR}
N(\chi;s)=2^m+2^mN(s).
\ee

Using Theorem \ref{theorem2.determineCR}, it is enough to show that $N(s) >0$ if $s$ is an odd integer with $s \ge s^*$. Combining this with Eq. (\ref{ep1.theorem.main.determineCR}) we conclude that it is enough to show that
\be \label{ep2.theorem.main.determineCR}
N(\chi;s)>2^m
\ee
if $s$ is an odd integer with $s \ge s^*$.

Note that the genus of $\chi$ is $1+2^{m-1}(m-2)$, which is computed in the proof of \cite[Theorem IV.1]{SHO2023IT}. Hence using Hasse-Weil Inequality (see, for example, \cite{G-Sti}), we obtain that
\be \label{ep3.theorem.main.determineCR}
 N(\chi;s) \ge q_0^s + 1 -2\left(1 + 2^{m-1}(m-2)\right) q_0^{s/2}.
\ee
Using the methods of in the proof of \cite[Theorem IV.4]{SHO2023IT} we have that
\be \label{ep4.theorem.main.determineCR}
 q_0^s + 1 -2\left(1 + 2^{m-1}(m-2)\right) q_0^{s/2} > 2^m
 \ee
if $s$ is an odd integer with $s \ge s^*$. Combining Eq. (\ref{ep2.theorem.main.determineCR}), (\ref{ep3.theorem.main.determineCR}) and (\ref{ep4.theorem.main.determineCR}) we complete the proof.
\end{proof}

The following definition is crucial for the characterization of the covering radius of the codes in this paper.

\begin{definition} \label{definition.Iq0}
Let $\ell \ge 3 $ be an integer. Let $\F_{q_0}$ be a finite field such that Eq. (\ref{Assumption1}) holds. Let $I(q_0)$ be the set consisting of odd integers $s \ge 1$ such that the covering radius of $\mathcal{C}_s(q_0)$ is $3$.

It follows from Theorem \ref{theorem1.determineCR} that if $s \ge 1$ is an odd integer and $s \not \in I(q_0)$, then the covering radius of $\mathcal{C}_s(q_0)$ is $2$.
\end{definition}

Moreover it follows from Corollary \ref{corollary1.determineCR} that $1 \not \in I(q_0)$ for any finite field $\F_{q_0}$ such that Eq. (\ref{Assumption1}) holds, where $\ell \ge 3 $ is an integer.
The following is an immediate corollary of Theorem \ref{theorem.main.determineCR}.

\begin{corollary} \label{corrollary.Iq0.sturates}
Let $\ell \ge 3 $ be an integer. Let $\F_{q_0}$ be a finite field such that Eq. (\ref{Assumption1}) holds. Let $s^*$ be the (positive) odd integer defined in Eq. (\ref{s*.determinedCR}).
If $s \ge s^*$ is an odd integer, then $s \in I(q_0)$.
\end{corollary}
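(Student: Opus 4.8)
The plan is to derive the statement directly from Theorem~\ref{theorem.main.determineCR} together with Definition~\ref{definition.Iq0}, so the ``proof'' is essentially a bookkeeping step. Recall that, by Definition~\ref{definition.Iq0}, the set $I(q_0)$ is by construction the collection of all odd integers $s \ge 1$ for which the covering radius of $\mathcal{C}_s(q_0)$ equals $3$. Thus it suffices to observe that every odd integer $s \ge s^*$ satisfies this property, and this is precisely the conclusion of Theorem~\ref{theorem.main.determineCR}. Concretely: I would fix an odd $s \ge s^*$; by Theorem~\ref{theorem.main.determineCR} the covering radius of $\mathcal{C}_s(q_0)$ is $3$; hence, by Definition~\ref{definition.Iq0}, $s \in I(q_0)$. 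That is the entire argument, and it carries no obstacle of its own.

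For completeness I would recall where the real content lies, since the corollary contributes no new difficulty. The substance is inside Theorem~\ref{theorem.main.determineCR}: via Theorem~\ref{theorem1.determineCR} one reduces ``covering radius $=3$'' to the validity of Property NPi for some even index $i$; via Theorem~\ref{theorem2.determineCR} this becomes the solvability of the simultaneous system $y_k^2 = x^2 - \alpha_k$, $1 \le k \le m$, over $\F_q^*$; and this in turn is controlled by the number $N(\chi;s)$ of $\F_q$-rational points on the fibre product curve $\chi$, through the identity $N(\chi;s) = 2^m + 2^m N(s)$ of Eq.~(\ref{ep1.theorem.main.determineCR}). The Hasse--Weil inequality, together with the genus $1 + 2^{m-1}(m-2)$ of $\chi$, yields $N(\chi;s) \ge q_0^s + 1 - 2(1 + 2^{m-1}(m-2))q_0^{s/2}$, and the defining inequality of $s^*$ in Eq.~(\ref{s*.determinedCR}) is exactly the condition making the right-hand side exceed $2^m$ for odd $s \ge s^*$, so $N(s) > 0$ and Property NPi holds. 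All of this analytic machinery --- the Hasse--Weil bound, the genus computation, and the monotonicity argument propagating the threshold from $s = s^*$ to every larger odd $s$ --- has already been established in the preceding results, so the corollary follows immediately by quoting Theorem~\ref{theorem.main.determineCR} and Definition~\ref{definition.Iq0}; there is no step here that presents a genuine obstacle.
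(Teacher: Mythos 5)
Your proposal is correct and matches the paper exactly: the paper itself states this corollary as an immediate consequence of Theorem~\ref{theorem.main.determineCR} combined with Definition~\ref{definition.Iq0}, offering no further proof. Your one-line deduction (fix odd $s \ge s^*$, invoke the theorem to get covering radius $3$, then apply the definition of $I(q_0)$) is precisely the intended argument, and your summary of where the real work lies is accurate.
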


Next we present a useful property of $I(q_0)$.

\begin{proposition} \label{proposition.property}
Let $\ell \ge 3 $ be an integer. Let $\F_{q_0}$ be a finite field such that Eq. (\ref{Assumption1}) holds. Let $t \ge 1$ be an odd integer.
If $s \in I(q_0)$, then for the odd integer $st$ we have $st \in I(q_0)$.
\end{proposition}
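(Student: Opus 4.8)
The plan is to relate the covering radius of $\mathcal{C}_s(q_0)$ to that of $\mathcal{C}_{st}(q_0)$ via the field tower $\F_q \subseteq \F_{q^t}$, where $q = q_0^s$ and $q^t = q_0^{st}$. By Theorem \ref{theorem1.determineCR}, $s \in I(q_0)$ means there is an even index $i$ (or odd index; both cases are handled symmetrically) for which Property NPi holds, and by Theorems \ref{theorem2.determineCR} and \ref{theorem3.determineCR} this is equivalent to the solvability over $\F_q^*$ of the system $y_k^2 = x^2 - \gamma_k$ ($1 \le k \le m$), where $\gamma_1,\dots,\gamma_m$ runs over the nonzero squares (resp. nonzero non-squares) of $\F_{q_0}$. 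The crucial point is that this system and the list $\gamma_1,\dots,\gamma_m$ depend only on $\F_{q_0}$, not on $s$: the relevant characterization for $\mathcal{C}_{st}(q_0)$ is the solvability of the \emph{same} system over $\F_{q^t}^* = \F_{q_0^{st}}^*$. So the core task is to show that a solution over $\F_q^*$ persists as a solution over the extension $\F_{q^t}^*$.

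First I would record, using Lemma \ref{lemma3.determineCR} (or directly Eq. (\ref{Assumption1}) and the fact that $st$ is odd), that $q_0^{st} \equiv 2^\ell - 1 \pmod{2^{\ell+1}}$, so that the whole machinery of Section \ref{section4} applies verbatim to $\mathcal{C}_{st}(q_0)$ with the base field $\F_{q_0}$ unchanged; in particular $m = (q_0-1)/2$ is the same and the enumerations $\alpha_1,\dots,\alpha_m$ of nonzero squares and $\beta_1,\dots,\beta_m$ of nonzero non-squares of $\F_{q_0}$ are literally the same finite lists. Second, since $t$ is odd, $\F_q = \F_{q_0^s}$ embeds in $\F_{q_0^{st}}$, and an element of $\F_q^*$ remains nonzero in $\F_{q_0^{st}}^*$; hence any solution $(x, y_1,\dots,y_m) \in (\F_q^*)^{m+1}$ of the system $y_k^2 = x^2 - \alpha_k$ (resp. with $\beta_k$) is a fortiori a solution in $(\F_{q_0^{st}}^*)^{m+1}$. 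Third, I would invoke Theorem \ref{theorem2.determineCR} (resp. Theorem \ref{theorem3.determineCR}) in the reverse direction for $\mathcal{C}_{st}(q_0)$: solvability of the system over $\F_{q_0^{st}}^*$ gives Property NPi for the corresponding even (resp. odd) index $i'$ in the range $0 \le i' \le 2^\ell - 1$ attached to $\mathcal{C}_{st}(q_0)$, and then Theorem \ref{theorem1.determineCR} yields that the covering radius of $\mathcal{C}_{st}(q_0)$ is $3$, i.e.\ $st \in I(q_0)$.

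The one point that needs a little care — and the most likely obstacle — is the bookkeeping with the index $i$: Property NPi is indexed by $0 \le i \le 2^\ell - 1$ relative to a primitive $2^\ell$-th root of unity $\theta$ living in $\F_{q^2}$ for $\mathcal{C}_s(q_0)$ and in $\F_{q^{2t}}$ for $\mathcal{C}_{st}(q_0)$, so strictly speaking the two families of properties are a priori different. However, Theorems \ref{theorem2.determineCR} and \ref{theorem3.determineCR} show that for \emph{any} even index the condition is the single system with the $\alpha_k$'s, and for any odd index it is the single system with the $\beta_k$'s; so the index drops out entirely once one passes to the curve-theoretic reformulation, and all that matters is whether the (square) or (non-square) system is solvable over the relevant field. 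Thus it suffices to carry out the argument separately in the "square" case (some even $i$ works for $\mathcal{C}_s(q_0)$) and the "non-square" case (some odd $i$ works), and in each case the persistence of solutions under the odd-degree field extension $\F_q \hookrightarrow \F_{q_0^{st}}$ finishes the proof. I would present it as two short parallel paragraphs, or unify them by letting $\gamma_1,\dots,\gamma_m$ denote either enumeration according to the parity of the witnessing index.
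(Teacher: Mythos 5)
Your proposal is correct and follows essentially the same route as the paper: reduce $s\in I(q_0)$ via Theorems \ref{theorem1.determineCR}, \ref{theorem2.determineCR} and \ref{theorem3.determineCR} to the solvability of one of the two systems (whose coefficients $\alpha_k$ or $\beta_k$ depend only on $\F_{q_0}$), and then observe that a solution in $\F_{q_0^s}^*$ remains a solution in $\F_{q_0^{st}}^* \supseteq \F_{q_0^s}^*$. The extra care you take with the index $i$ and with verifying $q_0^{st}\equiv 2^\ell-1 \pmod{2^{\ell+1}}$ is implicit in the paper's shorter argument but not a different method.
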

\begin{proof}
Let $m=(q_0-1)/2$. Let $\alpha_1, \ldots, \alpha_m$ be an enumeration of the nonzero squares in $\F_{q_0}$. Let $\beta_1, \ldots, \beta_m$ be an enumeration of the nonzero
non-squares in $\F_{q_0}$. Using Theorems \ref{theorem1.determineCR}, \ref{theorem2.determineCR} and \ref{theorem3.determineCR}, we obtain that at least one of the two following cases hold:
\begin{itemize}
\item[Case 1:] There exist $x,y_1, \ldots, y_m \in \F_{q_0^s}^*$ such that the system in Theorem \ref{theorem2.determineCR} holds.
\item[Case 2:] There exist $x,y_1, \ldots, y_m \in \F_{q_0^s}^*$ such that the system in Theorem \ref{theorem3.determineCR} holds.
\end{itemize}
As $\F_{q_0}^s \subset \F_{q_0}^{st}$, in each of the cases above we also obtain solutions of the corresponding systems with $x,y_1, \ldots, y_m  \in \F_{q_0^{st}}^*$. This completes the proof.
\end{proof}

Next we explicitly determine $I(q_0)$ for the smallest finite fields $\F_{q_0}$ which are not in $S$, so that which are not considered in \cite{SHO2023IT}.

\begin{example} \label{example q0 is 7}
Let $\ell=3$ and $q_0=7$. Using Magma we obtain that the odd integer $s^*$ defined in (\ref{s*.determinedCR}) becomes $s^*=3$ in this example. Hence we determine that
\be
I(7)=\{s \ge 3: \; \mbox{$s$ is an odd integer}\}.
\nn\ee
\end{example}

\begin{example} \label{example q0 is 23}
Let $\ell=3$ and $q_0=23$. Using Magma we obtain that the odd integer $s^*$ defined in (\ref{s*.determinedCR}) becomes $s^*=7$ in this example.
Moreover using Magma and Theorems \ref{theorem2.determineCR}, \ref{theorem3.determineCR} we further obtain that $3 \not \in I(23)$ and $5 \in I(23)$.
We note that the computation for the statement  $5 \in I(23)$ is quite time consuming.
Hence we determine that
\be
I(23)=\{s \ge 5: \; \mbox{$s$ is an odd integer}\}.
\nn\ee
\end{example}

\begin{example} \label{example q0 is 31}
Let $\ell=5$ and $q_0=31$. Using Magma we obtain that the odd integer $s^*$ defined in (\ref{s*.determinedCR}) becomes $s^*=9$ in this example.
Moreover using Magma and Theorems \ref{theorem2.determineCR}, \ref{theorem3.determineCR} we further obtain that $3 \not \in I(31)$ and $5 \in I(31)$.
We note that the computation for the statement deciding if $7$ is  in $I(31)$  or not takes too long and we had to stop after some time. Hence we do not know if $7$ is  in $I(31)$  or not.
Hence we determine that
\be
I(31)=\{s \ge 5: \; \mbox{$s$ is an odd integer}\} \; \mbox{or} \;
I(31)=\{s \ge 9: \; \mbox{$s$ is an odd integer}\} \cup \{5\}.
\nn\ee
\end{example}

\begin{example} \label{example q0 is 47}
Let $\ell=4$ and $q_0=47$. Using Magma we obtain that the odd integer $s^*$ defined in (\ref{s*.determinedCR}) becomes $s^*=11$ in this example.
Moreover using Magma and Theorems \ref{theorem2.determineCR}, \ref{theorem3.determineCR} we further obtain that $3 \not \in I(47)$ and $5 \not \in I(47)$.
We note that the computation for the statement  $5 \not \in I(47)$ is extremely time consuming.
We note that the computation for the statement deciding if $7$ is  in $I(47)$  or not takes too long and we had to stop after some time.
Similarly the computation for the statement deciding if $9$ is  in $I(47)$  or not takes too long and we had to stop after some time.
Hence we do not know if $7$ is  in $I(47)$  or not. Similarly we do not know if $9$ is  in $I(47)$  or not.
Hence we determine that
\be
I(47)=\{s \ge 11: \; \mbox{$s$ is an odd integer}\} \cup J,  \; \mbox{where} \;
J \subseteq \{7,9\}.
\nn\ee
Note that the options for $J$ are $\emptyset$, $\{7\}$, $\{9\}$, and $\{7,9\}$.
\end{example}

\section{Covering radius of twisted half generalized Zetterberg codes in odd characteristic}\label{section.half twisted half}

In this section, we introduce the concept of twisted half generalized Zetterberg codes in odd characteristic. We also show that
their covering radii are the same as that of the (full) generalized Zettenberg codes, that we consider in Sections \ref{section.upper bounds} and \ref{section4}.

Let $\ell \ge 3 $ be an integer. Let $\F_{q_0}$ be a finite field such that (\ref{Assumption1}) holds.  For an integer $s \ge 1$ let $q=q_0^s$ and $n=(q_0^s+1)/2$.

\begin{definition} \label{definition.twisted.half.zettenberg}
Under the notations as above, assume that $q=q_0^s\in S_\ell$, i.e., $q \equiv 2^{\ell}-1 \pmod{2^{\ell+1}}$. Let $H$ be the subgroup of $\F_{q^2}^*$ with $|H|=q+1$. Let $H_\ell$ be the subgroup of $\F_{q^2}^*$ with $|H_\ell|=\frac{q+1}{2^\ell}$. Let $\theta\in\mathbb{F}_{q^2}$ be a primitive $2^\ell$-th root of 1.
	Note that $\gcd\left(\frac{q+1}{2^\ell}, 2^\ell\right)=1$ and $\theta^{2^{\ell-1}}=-1$, hence $H$ can be decomposed
	into the following disjoint union:
	\be
	H=\bigcup_{i=0}^{2^\ell-1}\theta^i H_\ell=\left( \bigcup_{i=0}^{2^{\ell-1}-1}\theta^i H_\ell\right) \bigcup
	-\left(\bigcup_{i=0}^{2^{\ell-1}-1}\theta^i H_\ell\right)
	\stackrel{\triangle}{=} \mathcal{H}\cup -\mathcal{H}.
	\nn\ee
	We define the {\it twisted half generalized Zettenberg code $\mathcal{C}^{t}_s(q_0)$ of length $n=(q+1)/2$ over $\F_{q_0}$}
	as the linear code over
	$\F_{q_0}$ with the parity check matrix
	\be
	P_t=\left[
	\begin{array}{cccccccccc}
	\beta_1~\beta_2~\cdots~\beta_{n}
	\end{array}
	\right],
	\nn\ee
	where $\beta_1,\beta_2,\ldots,\beta_{n}$ are all elements of $\mathcal{H}.$
\end{definition}

In the next two theorems we determine the covering radius of these codes. We also characterize when these codes are quasi-perfect.

\begin{theorem} \label{CR GS constacyclic}
Let $\ell \ge 3 $ be an integer. Let $\F_{q_0}$ be a finite field such that (\ref{Assumption1}) holds.  Let $s\ge 1$ be an odd integer and put $n=(q_0^s+1)/2$. Let $\mathcal{C}^{t}_s(q_0)$ be the twisted half generalized Zettenberg code of length $n$ over $\F_{q_0}$ is defined in Definition \ref{definition.twisted.half.zettenberg}. Then the covering radius of $\mathcal{C}^{t}_s(q_0)$ is either $2$ or $3$, and its covering radius is $3$ if and only if $s \in I(q_0)$. Moreover $\mathcal{C}^{t}_s(q_0)$ is quasi-perfect if and only if $s \not \in I(q_0)$.
\end{theorem}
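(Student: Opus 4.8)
The plan is to reduce the statement about the twisted half code $\mathcal{C}^t_s(q_0)$ of length $n=(q_0^s+1)/2$ to the already-established facts about the full code $\mathcal{C}_s(q_0)$. First I would set up the dictionary between the two codes. The parity check matrix $P_t$ has columns running over $\mathcal{H}$, exactly half of $H$, with $H = \mathcal{H} \cup (-\mathcal{H})$. Over a field $\F_{q_0}$ of odd characteristic, $-1$ is a scalar in $\F_{q_0}^*$, so the column $-\beta$ of the full parity check matrix $P$ contributes nothing new to the set of $\F_{q_0}$-linear combinations of $\le 3$ columns: a sum $c_1 h_1 + c_2 h_2 + c_3 h_3$ with $h_j \in H$ and $c_j \in \F_{q_0}^*$ can always be rewritten with $h_j \in \mathcal{H}$ by absorbing the sign into $c_j$. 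Hence the set of syndromes expressible as a sum of at most $\rho$ columns of $P_t$ (with arbitrary nonzero scalars) equals the corresponding set for $P$. Since the covering radius is precisely the least $\rho$ such that every vector in $\F_{q_0}^{2s}$ is such a combination, I would conclude $\rho(\mathcal{C}^t_s(q_0)) = \rho(\mathcal{C}_s(q_0))$.

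Given this identification, the covering radius dichotomy transfers immediately: by Corollary \ref{corollary.upper} the covering radius of $\mathcal{C}_s(q_0)$ is at most $3$, and by Theorem \ref{theorem1.determineCR} it equals $3$ precisely when some Property NPi holds for an even or odd index $i$, which by Definition \ref{definition.Iq0} is exactly the condition $s \in I(q_0)$; otherwise it is $2$. Transporting these through the equality of covering radii gives that $\rho(\mathcal{C}^t_s(q_0)) \in \{2,3\}$ and that it is $3$ iff $s \in I(q_0)$. The one genuinely new ingredient is the quasi-perfectness claim, which requires knowing the packing radius $t(\mathcal{C}^t_s(q_0))$, hence the minimum distance $d$ of $\mathcal{C}^t_s(q_0)$: I would need to show $d = 5$, so that $t = 2$, and then quasi-perfect ($t+1 = \rho$) is equivalent to $\rho = 3$, i.e. to $s \in I(q_0)$, while $s \notin I(q_0)$ gives $\rho = 2 = t$, i.e. the code is perfect rather than quasi-perfect — wait, that would contradict the claim, so more care is needed here.

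Let me restate that last point, since it is the crux. The minimum distance of the twisted half code is governed by the smallest number of columns of $P_t$ that are $\F_{q_0}$-linearly dependent. I expect $d = 3$ for these codes: three distinct elements of $\mathcal{H}$ can be $\F_{q_0}$-dependent. Actually the right claim is likely that $d=4$ when $s\notin I(q_0)$ fails to be perfect in the ordinary Hamming sense but the sphere-packing count still forces quasi-perfectness; the cleanest route is to compute $d$ directly from the structure of $\mathcal{H}$ as a union of cosets of $H_\ell$. The argument I would run: a codeword of weight $w$ corresponds to a nontrivial $\F_{q_0}$-relation $\sum_{j=1}^w c_j \beta_j = 0$ among distinct $\beta_j \in \mathcal{H}$; weight $1$ is impossible, weight $2$ would force $\beta_1/\beta_2 \in \F_{q_0}^*$, and one checks using $|H_\ell| = (q+1)/2^\ell$ coprime to $q_0-1$ whether such a ratio can lie in $\F_{q_0}^* \subset H_m$. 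Pinning down $d$ exactly (most plausibly $d = 3$, giving $t = 1$, so that $\rho = 2 \iff$ perfect and $\rho = 3 \iff$ quasi-perfect, matching the stated ``quasi-perfect iff $s \notin I(q_0)$''... which is the opposite of $\rho = 3$) — reconciling the direction of the iff with the value of $t$ — is the step I expect to be the main obstacle, and it is where I would spend the most care: I would determine $t(\mathcal{C}^t_s(q_0))$ exactly (I anticipate $t=2$, i.e. $d=5$, forcing the reading that $s\notin I(q_0) \Rightarrow \rho = 2 = t$ means perfect, contradiction — so necessarily $d=5$ is wrong and instead $t=1$, giving quasi-perfect $\iff \rho=3$, the logical negation of which must then be re-examined against the claim) and then the equivalence follows by pure bookkeeping against $\rho = t + 1$.

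Concretely, the order of the writeup would be: (1) prove $\rho(\mathcal{C}^t_s(q_0)) = \rho(\mathcal{C}_s(q_0))$ via the sign-absorption argument; (2) invoke Corollary \ref{corollary.upper}, Theorem \ref{theorem1.determineCR}, and Definition \ref{definition.Iq0} to get $\rho \in \{2,3\}$ with $\rho = 3 \iff s \in I(q_0)$; (3) compute the packing radius $t(\mathcal{C}^t_s(q_0))$ from the minimum distance, using the coset structure of $\mathcal{H}$ and the coprimality $\gcd((q+1)/2^\ell, 2^\ell) = 1$ together with $\F_{q_0}^* \cdot H = H_m$; (4) combine $t$ with the value of $\rho$ from step (2) to read off exactly when $t(\mathcal{C}^t_s(q_0)) + 1 = \rho(\mathcal{C}^t_s(q_0))$, which yields the quasi-perfectness characterization. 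Step (3) is the substantive one and the rest is essentially transport of structure.
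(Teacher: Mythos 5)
Your steps (1) and (2) are essentially the paper's argument: the covering radius of $\mathcal{C}^{t}_s(q_0)$ equals that of $\mathcal{C}_s(q_0)$ because $-1\in\F_{q_0}^*$ lets you absorb the sign of any column in $-\mathcal{H}$ into the scalar coefficient (the paper delegates this to the method of \cite[Theorem VI.4]{SHO2023IT}, but the idea is the same), and the dichotomy $\rho\in\{2,3\}$ with $\rho=3\iff s\in I(q_0)$ then transfers directly. The genuine gap is in your steps (3)--(4). You never pin down the minimum distance, and, more seriously, you repeatedly invert the definitions of perfect and quasi-perfect: you write ``$t=1$, so that $\rho=2\iff$ perfect and $\rho=3\iff$ quasi-perfect'' and later ``$t=1$, giving quasi-perfect $\iff\rho=3$.'' With $t=1$, perfect means $\rho=t=1$ (which never occurs here) and quasi-perfect means $\rho=t+1=2$, i.e.\ exactly $s\notin I(q_0)$. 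There is no contradiction with the stated theorem; the ``contradiction'' you flag and leave unresolved is an artifact of this sign error in the bookkeeping, and as written your proof does not establish the quasi-perfectness claim at all.

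What is actually needed, and what the paper does, is to show $3\le d\le 4$, whence $t=\lfloor(d-1)/2\rfloor=1$ in either case. The lower bound: weight $1$ is impossible since all columns are nonzero; a weight-$2$ codeword would give distinct $\beta_1,\beta_2\in\mathcal{H}$ with $\beta_1/\beta_2\in H\cap\F_{q_0}^*=\{1,-1\}$ (using $\gcd(q_0^s+1,q_0-1)=2$ for $s$ odd), forcing $\beta_1=-\beta_2$, which contradicts the disjointness of $\mathcal{H}$ and $-\mathcal{H}$ in Definition \ref{definition.twisted.half.zettenberg} --- this is precisely the coset/coprimality check you gesture at but do not carry out. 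The upper bound $d\le 4$ follows from the sphere-packing bound applied to the $[n,n-2s]$ parameters (so you also need the dimension, which the paper derives from \cite[Lemma A.1]{SHO2023IT}); your tentative $d=5$, $t=2$ is excluded exactly by this bound. Once $t=1$ is in hand, the equivalence ``quasi-perfect $\iff\rho=2\iff s\notin I(q_0)$'' is immediate.
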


\begin{proof}
Firstly, we discuss the parameters of the twisted half generalized Zettenberg code $\mathcal{C}^{t}_s(q_0)$. From \cite[Lemma A.1]{SHO2023IT}, the Zetterberg code $\mathcal{C}_s(q_0)$ has dimension $q_0^s+1-2s$. Through this, it is not difficult to obtain that the twisted half generalized Zettenberg code $\mathcal{C}^{t}_s(q_0)$ of length $n=(q+1)/2$ also has dimension $n-2s$. 
Next, we consider the minimum distance $d$ of the twisted half generalized Zettenberg code $\mathcal{C}^{t}_s(q_0)$. Since all columns in $P_t$ are nonzero it follows that $d\geq 2$.
Let $h_\ell\in H_\ell$ be a generator of $H_\ell$. Note that $\gcd\left(q_0^s+1, q_0-1\right)=2$ and hence 
$$H\cap \F_{q_0}=\{1,-1\}.$$
It folows from Definition \ref{definition.twisted.half.zettenberg} that $1\in \mathcal{H}$ and $-1\notin \mathcal{H}$. If there exists a a codeword of weight $2$, then there are two distinct elements $\beta_1,\beta_2\in \mathcal{H}$ and $c_1,c_2\in \F_{q_0}^*$ such that $c_1\beta_1=c_2\beta_2$ and $\beta_1\neq \beta_2$. It turns out that $\frac{\beta_1}{\beta_2}=\frac{c_1}{c_1}\in H\cap \F_{q_0}=\{1,-1\}$.
Since $\beta_1\neq \beta_2$, $\beta_1=-\beta_2$. This is impossible because $H$ is the disjoint union of $\mathcal{H}$ and $-\mathcal{H}$.
Hence, $d\geq 3$. By the well-known sphere-packing bound \cite{Mac}, we obtain that $d\leq 4$. Therefore, the twisted half generalized Zettenberg code $\mathcal{C}^{t}_s(q_0)$ has the parameters
$[n, n-2s, 3 \leq d \leq 4]$.

Let $\mathcal{C}_s(q_0)$ be the generalized Zetterberg code of length $q_0^s+1$ over $\F_{q_0}$. Using the results above we obtain that the covering radius of $\mathcal{C}_s(q_0)$ is either $2$ or $3$, and its covering radius is $3$ if and only if $s \in I(q_0)$.
Using the same methods in the proof of \cite[Theorem VI.4]{SHO2023IT}, we obtain that covering radius of $\mathcal{C}^{t}_s(q_0)$ is the same as the covering radius of $\mathcal{C}_s(q_0)$.

As the minimum distance $d$ of $\mathcal{C}^{t}_s(q_0)$ satisfies that $3 \le d \le 4$, we obtain that $\lfloor (d-1)/2 \rfloor=1$ and hence $\mathcal{C}^{t}_s(q_0)$ is quasi-perfect if and only if $s \not \in I(q_0)$.
\end{proof}

Let $\ell \ge 3 $ be an integer. Let $\F_{q_0}$ be a finite field such that (\ref{Assumption1}) holds. Using Theorem \ref{CR GS constacyclic} we obtain many examples of
quasi-perfect codes. Indeed for each such $\F_{q_0}$ by putting $s=1$ we obtain a quasi-perfect $[(q_0+1)/2,(q_0-3)/2, 3 \le d \le 4]$ code over $\F_{q_0}$.

Moreover using Examples
\ref{example q0 is 23}, \ref{example q0 is 31}, \ref{example q0 is 47}, if $s=3$ and $q_0 \in \{23,31,47\}$, then we obtain a quasi-perfect $[(q_0^3+1)/2,(q_0^3-11)/2, 3 \le d \le 4]$ code over $\F_{q_0}$.

Furthermore using Example \ref{example q0 is 47}, putting $s=5$ and $q_0=47$, we obtain a quasi-perfect $[(q_0^5+1)/2,(q_0^5-19)/2, 3 \le d \le 4]$ code over $\F_{q_0}$.

It seems that if $q_0$ is large, then there are further values of odd integers $1 < s < s^*$ such that $s \not \in I(q_0)$ and hence quasi-perfect
$[(q_0^s+1)/2,(q_0^s+1)/2-2s, 3 \le d \le 4]$ codes over $\F_{q_0}$, where $\F_{q_0}$ is a finite field such that (\ref{Assumption1}) holds for an integer $\ell \ge 3$.

\section{Concluding remarks}\label{section6}
In this paper, we have determined the covering radius of generalized Zetterberg codes for $q_0^s \equiv 7 \pmod{8}$, which therefore solves an open problem proposed in \cite{SHO2023IT}. Togehter with the results of \cite{SHO2023IT}, we have determined the covering radius of all generalized Zetterberg codes in odd characteristic. We also introduce the concept of twisted half generalized Zetterberg codes of length $\frac{q_0^s+1}{2}$, and show the same results hold for them. As a result, we obtain some quasi-perfect codes.

It would be interesting but hard to consider the covering radius of Zetterberg type codes in even characteristic in a future work.

\section*{Acknowledgments}
The research of Shitao Li and Minjia Shi is supported by the National Natural Science Foundation of China under Grant 12471490. The research of Tor Helleseth was supported by the Research Council of Norway under Grant 247742/O70. The research of Ferruh \"{O}zbudak is supported by T\"{U}B\.{I}TAK under Grant 223N065.

\section*{Declarations}

\noindent\textbf{Data availability} No data are generated or analyzed during this study.  \\

\noindent\textbf{Conflict of Interest} The authors declare that there is no possible conflict of interest.

\end{sloppypar}
\end{document}